\newtheorem{thm}{Theorem}
\newtheorem{dfn}{Definition}
\newtheorem{lm}{Lemma}
\newtheorem{prop}{Proposition}
\newtheorem{crl}{Corollary}
\newtheorem{rem}{Remark}
\newcommand{\RR}{{\mathbb R}}
\newcommand{\NN}{{\mathbb N}}
\newcommand{\norm}[1]{\lVert#1\rVert}
\newcommand{\dotex}{{\frac{d}{dt}}}
\title{\LARGE \bf
Intrinsic filtering on Lie groups	with applications to attitude estimation}
\author{Axel Barrau, Silv\`ere Bonnabel
\thanks{A. Barrau and S. Bonnabel are with MINES ParisTech, PSL Research University, Centre for robotics, 60 Bd St Michel 75006 Paris, France.
       {\tt\footnotesize [axel.barrau,silvere.bonnabel]@mines-paristech.fr}}%
}
\begin{document}

\maketitle

\thispagestyle{empty}

\pagestyle{empty}


\begin{abstract}
This paper proposes a probabilistic approach to the problem of intrinsic filtering of a system on a matrix Lie group with invariance properties. The problem of an invariant continuous-time model with  discrete-time measurements  is  cast into a rigorous stochastic and geometric framework. Building upon the theory of continuous-time invariant observers, we introduce a class of simple filters and study their properties (without addressing the optimal filtering problem). We show that, akin to the Kalman filter for linear systems, the error equation is a Markov chain that does not depend on the state estimate. Thus, when the filter's gains are held fixed, the noisy error's distribution is proved to converge to a stationary distribution, under some convergence properties of the filter with noise turned off. We also introduce two novel tools of engineering interest: the discrete-time Invariant Extended Kalman Filter, for which the trusted covariance matrix is shown to converge, and the Invariant Ensemble Kalman Filter. The methods are applied to attitude estimation, allowing to derive novel theoretical results in this field, and illustrated through simulations on synthetic data.
\end{abstract}

\section{Introduction}

The problem of probabilistic filtering aims at computing  the conditional expectation $\hat x(t)$ of the state $x(t)$ of a system driven by process and observation noises conditioned on the past inputs  and outputs. Filtering is a branch of stochastic process theory that has been strongly driven by applications in control and signal processing. In the particular case where the system is linear and corrupted by independent white Gaussian process and observation noises, the celebrated Kalman filter \cite{Kalman-1961}  solves the filtering problem. Yet, when the system is non-linear there is no general method to derive efficient filters,  and their design has encountered important difficulties in practice. Indeed, general filtering equations describing the evolution of the state conditioned on past outputs can also be derived, see e.g., \cite{pardoux} and references therein. However, the equations are not closed, and in most cases the conditional densities can not be computed by solving a finite dimensional equation.

In  engineering applications the most popular solution is the extended Kalman filter (EKF), which does not possess general theoretical properties, especially in a stochastic context,  and is sometimes prone to divergence. The EKF  amounts to linearize the system around the estimated trajectory, and build a Kalman filter for the linear model, which can in turn  be implemented on the non-linear model. Despite the fact that it is subject to several problems both theoretical and practical, one merit of the EKF  is to convey an estimation of the mean state $\hat x(t)$ together with an approximation of the covariance matrix $P(t)$ of the error, yielding a (trusted) ellipsoidal credible set that reflects the level of uncertainty in the estimation.  In order to capture more closely the distribution, several rather computationally extensive methods have recently attracted growing interest such as particle filters and unscented Kalman filters (UKF) \cite{UKF}.

In this paper we consider a general filtering problem on Lie groups in a stochastic setting, and a class of simple recursive finite dimensional filters, having a similar structure as the EKF, is introduced. Such  recursively defined filters allow straightforward digital computer implementation. Besides, as the Kalman filter, the corresponding algorithms will be readily implementable with very little understanding of the theory leading to their development, as illustrated by the concrete examples of sections \ref{asymptotic2vect}, \ref{Artificial_horizon} and \ref{sect::simus_gaussian}, which can be read independently of the rest of the paper. As concerns their theoretical properties, they are motivated by the three following desirable properties a simple recursive filter should ideally have: 1- forgetting of the initial condition, 2- being able to capture the extent of uncertainty in the estimate, 3- being efficient, that is, minimizing the error's dispersion within a given class. In the filtering problem, the estimation error is initially distributed according to a prior distribution  provided by the practitioner. The rationale for the first property is that, as the filter acquires information, the prior information should be given less (eventually no) importance by the filter. As the user generally possesses very little information on the system's state initially, the prior can be very far from the truth. Forgetting  the initial condition allows  the (subjective Bayesian) prior distribution to be as informative as noninformative without impacting the final estimates.  Moreover, it corresponds to convergence properties of the filter, and thus deals with the ability  to recover from catastrophic failures. In robotics, this leads to  robustness to erroneous beliefs due for instance to perceptual ambiguity.  The second property deals with the central motivation behind stochastic approaches. Indeed, all deterministic observers, no matter how strong properties they may have, have a common fundamental weakness: in the presence of sensor and model uncertainties, they provide no indication of the extent of uncertainty conveyed by the estimate. Being able to compute credible sets for the estimate may be needed in order to adapt control decisions to the situation, for instance to guarantee an absence of collision. The two first properties are general and apply to any class of sensible filters. The third property   addresses the issue of gain tuning. When working with a class of (suboptimal) filters, the gains can be tuned over a training set in order to minimize the level of uncertainty conveyed by the estimate. 

Several approaches to filtering for systems possessing a geometric
structure have been developed in previous literature. For stochastic
processes on Riemannian manifolds
\cite{ito1950stochastic,duncan1979stochastic} some results have been
derived, see e.g., \cite{ng1985nonlinear}. The specific situation where the
process evolves in a vector space but the observations belong to a manifold
has also been considered, see e.g.
\cite{pontier1987filtering,duncan1977some} and more recently
\cite{said2013filtering}. For systems on Lie groups  powerful tools  to
study the filtering equations (such as harmonic analysis
\cite{willsky-thesis,park2008kinematic,lo1986optimal,raey}) have been used,
notably in the case of bilinear systems  \cite{willsky1975estimation} and
estimation of the initial condition of a Brownian motion in
\cite{duncan1988estimation}.
An extension of
Gaussian distributions has also been developed and recently used for
Bayesian filtering on Lie groups in \cite{BayesianLieGroups2011} and also
in \cite{bourmaud2013discrete} which devises a general adaptation of the
EKF to Lie groups, without exploiting the geometrical properties of the
specific invariant case such as in the present paper. A somewhat different
but related approach to filtering consists of finding the path that best fits the data in a deterministic setting. It is thus related to optimal
control theory where geometric methods have long played an important role
\cite{brockett1973lie}. A certain class of least squares problems on the
Euclidian group has been tackled in \cite{han2001least}. This approach has
also been used to filter inertial data recently in \cite{sacconsecond}, \cite{mahony-et-al-IEEE}. 

Although the paper is concerned with a general theory on (matrix) Lie groups, the several derived filters will systematically be applied to a benchmark and motivating filtering example on the special orthogonal group $SO(3)$. The problem of attitude estimation consists of estimating the orientation of a rigid body,  which proves to be a central task in aeronautics, in order to stabilize or guide the system,  or in mobile robotics, where orientation is a part of the robot's localization. When the system is equipped with gyroscopes, the delivered signals can be integrated over time to compute an orientation increment. However, the initial orientation may not be known, and even if it is, the gyroscopes are subject to noise and drift, so that pure integration may quickly result in a very erroneous attitude. Thus the estimations are regularly checked  against some  vectors' orientation, such as the earth magnetic field measured by magnetometers, or the earth gravity vector measured by accelerometers under quasi-stationary flight assumptions. The explosion over the last decade of low-cost noisy sensors integrated in unmanned aerial vehicles (UAV's), and the limited computational power on-board,  have been a driving force for the development of simple filters on $SO(3)$. Work on the deterministic problem dates back to  \cite{salcudean1991globally,nijmeijer1999new} and has received much attention recently, see \cite{Vasconcelos,markley2005nonlinear,mahony-et-al-IEEE,bonnabel2008symmetry,MarSal2007a,sanyal2008global} to cite a few, and some related issues  have been addressed as well, like angular velocity estimation \cite{thienel2003coupled} or attitude control  \cite{salcudean1991globally,tayebi2007attitude}. The beauty of the intrinsic problem formulation on $SO(3)$ and the properties stemming from it have been exploited in various papers in the deterministic case, see e.g. \cite{mahony-et-al-IEEE,lageman2010gradient,arxiv-08}. The problem has also been cast into a minimum energy filtering framework in \cite{Zamni11}. In the stochastic framework, a general review of the classical techniques based on Gaussian assumptions is proposed in \cite{markley2005nonlinear}. A first attempt to systematically exploit the invariance properties to design stochastic filters on $SO(3)$ appears in the preliminary work \cite{barrau-bonnabel-cdc13}.

The contributions and the organization of the present paper are as follows. In Section \ref{Problem_setting},  the problem of stochastic filtering on Lie groups with a continuous-time noisy model and discrete-time observations is rigorously posed. From a theoretical viewpoint, the choice of discrete-time measurements allows to circumvent some technical difficulties and to cast the problem into a perfectly rigorous framework. As this work is mainly guided by applications to aerospace, this choice  is relevant as the evolution equations of the orientation generally rely on the integration of noisy inertial measurements that are delivered at a high rate, whereas the observations may be delivered at low(er) rate (GPS, vision). In order to derive proper filters, the problem is then transformed into a complete discrete-time model. The discretization is not straightforward because unlike the linear case, difficulties arise from non-commutativity.

Guided by the theory of symmetry-preserving observers that has been exclusively concerned with the continuous-time and deterministic case so far, we propose in Section \ref{InvariantFiltering} a class of natural filters. They appear as a mere transposition of linear observers (such as the linear Kalman filter) to the group, but where the addition has been replaced with the group multiplication law (due to non-commutativity, there is always some freedom in the transposition as left and right multiplications are different operations). We prove the proposed filters retain the invariances of the problem, and that the discrete-time error's evolution is independent of the system's trajectory, inheriting the properties of the deterministic continuous-time case \cite{arxiv-08} (or merely the linear case). Thus, the gains can be computed off-line, and the proposed filters require very few on-board computational resources. This is one of their main advantages over current filters for the attitude estimation problems considered in simulations.


The proposed class is very broad and the tuning issue is far from trivial. Sections \ref{Fixed_gains} and \ref{Gaussian_filters} explore two different routes, depending on if we are interested in the asymptotic or transitory phase. In Section \ref{Fixed_gains}, we propose to hold the gains fixed over time. As a result, the error  equation becomes a \emph{homogeneous} Markov chain. We first prove some new results about global convergence in a deterministic and discrete-time framework. Then, building upon the homogeneity property of the error, we prove that if the filter with noise turned off admits almost global convergence properties, the error with noise turned on converges to a stationary distribution. Mathematically this is a very strong and to some extent surprising result. From a practical viewpoint, the gains can be tuned numerically to minimize the asymptotic error's dispersion. This allows to ``learn" sensible gains for very general types of noises. The theory is applied to two examples and gives convergence guarantees in each case. First an attitude estimation problem using two vector measurements and a gyroscope having isotropic noise (Section \ref{asymptotic2vect}), then the construction of an artificial horizon with optimal gains, for a non-Gaussian noise model (Section \ref{Artificial_horizon}). Each application is a contribution in itself and can be implemented without reading the whole paper.

In Section \ref{Gaussian_filters}, we propose to optimize the convergence during the transitory phase using Gaussian approximations. We first introduce a method of engineering interest: the Invariant Extended Kalman Filter (IEKF) in discrete time. The IEKF was introduced in continuous time in \cite{bonnabel2007left} and developped and applied to localization problems in \cite{bonnabel2009invariant} and very recently in \cite{barczyk2013invariant}. A discrete time version on $SO(3)$ was proposed in the preliminary paper \cite{barrau-bonnabel-cdc13}. The idea consists of linearizing the invariant error for the class of filters introduced in Section \ref{InvariantFiltering} around a fixed point (the identity element of the matrix group), build  a Kalman filter for the linear model and implement it on the non-linear model. As the linearizations always occur around the same point, the linearized model is time-invariant and thus the Kalman gains, as well as the Kalman covariance matrix, are proved to converge to fixed values. When on-board storage and computational resources are very limited, this advantageously allows   to replace the gain with its asymptotic value. The IEKF is compared to the well-known MEKF \cite{lefferts1982kalman,Crassidis} and UKF \cite{UKF,crassidis2003unscented} on the attitude estimation problem, and simulations illustrate some convergence properties that the latter lack. In the case where the error equation is totally autonomous, we introduce a new method based on off-line simulations, the IEnKF, which outperforms the other filters in case of large noises by capturing very accurately the error's dispersion.

\section{Problem setting}
\label{Problem_setting}

\newtheorem{theorem}{Proposition}

\subsection{Considered continuous-time model}
Consider a state variable $\chi_t$ taking values in a matrix Lie group $G$ with neutral element $I_d$, and the following continuous-time model with discrete measurements:
\begin{align}
\dotex{\chi}_t & = (\upsilon_t+w_t) \chi_t + \chi_t \omega_t \label{dynamic}\\
Y_n & =h(\chi_{t_n},V_n)\label{eq2}
\end{align}
where $\upsilon_t$ and $\omega_t$ are inputs taking their values in the Lie algebra $\mathfrak{g}$ (i.e. the tangent space to the identity element of $G$), $w_t$ is a continuous white Langevin noise with diffusion matrix $\Sigma_t$, whose precise definition will be discussed below in Subsection  \ref{app::Strato}.   $(Y_n)_{n \geqslant 0}$ are the discrete-time observations, belonging to some measurable space $\mathcal{Y}$ and $V_n$ is a noise taking values in $\RR^p$ for an integer $p>0$. We further make the following additional assumption:
\paragraph*{Assumption 1 (left-right equivariance)}
The output map $h$ is left-right equivariant, i.e. there exists a left action of $G$ on $\mathcal{Y}$ such that we have the equality in law:
\begin{align}
 \forall g,\chi \in G, &  ~h(\chi  g,V_n) \overset{\mathcal{L}}{=}g^{-1}h(\chi,V_n) \\
 \text{ and } & h(\chi  g,0) = g^{-1}h(\chi,0) 
\end{align}

The reader who is not familiar with stochastic calculus and Lie groups can view $\chi_t$ as a rotation matrix, and replace the latter property with the following more restrictive assumption:
\paragraph*{Assumption 1 bis}
For any $n \geqslant 0$ there exists a vector $b$ such that $h(\chi,V_n) {=} \chi^{-1}(b+V_n)$.

The assumptions could seem restrictive but are verified in practice in various cases as shown by the following examples, which will provide the reader with a more concrete picture than the formalism of Lie groups.


\subsection{Examples}

\subsubsection{Attitude estimation on flat earth}
\label{expl::2vect}
Our motivating example for the model \eqref{dynamic}-\eqref{eq2} is the attitude estimation of a rigid body assuming the earth is flat, and  observing  two vectors:

\begin{equation}
\label{Ex1}
  \begin{aligned}
\dotex{R}_t & = R_t (\omega_t+w_t^\omega)_{\times}, \\
Y_n & = (R_{t_n}^T b_1+V_n^1, R_{t_n}^T b_2 + V_n^2), 
  \end{aligned}
\end{equation}
where as usual $R_t\in{SO(3)}$ represents the rotation that maps the body frame to the earth-fixed frame, and where $\omega_t\in\RR^3$ is the instantaneous rotation vector, and $w_t^\omega \in\RR^3$ is a continuous Gaussian white noise representing the gyroscopes' noise. We have let $(x)_\times\in\RR^{3\times 3}$ denote the skew matrix associated with the cross product with a three dimensional vector, i.e., for $a,x\in\RR^3$, we have $(x)_\times a=x\times a$. $(Y_n)_{n\geq 0}$ is a sequence of discrete noisy measurements of two vectors $b_1,b_2$ of the earth fixed frame verifying $b_1 \times b_2 \not= 0$, and  $V_n^1$ and $V_n^2$ are sequences of independent isotropic Gaussian white noises. Note that the noise $w_t^\omega$ is defined, as the input $\omega_t$, in the body frame (in other words it is multiplied on the left). Thus equations \eqref{Ex1} do not match \eqref{dynamic}-\eqref{eq2} which correspond to a noise defined in the earth-fixed frame. This can be remedied in the particular case where the gyroscope noise is isotropic, a restrictive yet relevant assumption in practice.
\begin{dfn}
A Langevin noise $w_t$ of $\frak{g}$ is said isotropic if $\tilde w_t:=g w_t g^{-1} \overset{\mathcal{L}}{=} w_t$ for any $g \in G$.
\end{dfn}
Note that for a noise taking values in $\mbox{so}(3)$ this definition corresponds to the physical intuition of an isotropic noise of $\RR^3$. With this additional assumption the equation can be rewritten:
\begin{equation*}
  \begin{aligned}
\dotex{R}_t & = (\tilde w_t^\omega)_{\times} R_t + R_t (\omega_t)_{\times}, \\
Y_n & = (R_{t_n}^T b_1+V_n^1, R_{t_n}^T b_2 + V_n^2),
  \end{aligned}
\end{equation*}
which corresponds to \eqref{dynamic}-\eqref{eq2}.

\begin{rem}
If the gyrometer noise is not isotropic the new noise $\tilde{w}_t^{\omega}$ is related to $R_t$ by $\tilde{w}_t^{\omega}=R_t w_t^{\omega} R_t^T$. Depending on the degree of anisotropy this can prevent the use of methods based on the autonomy of the error (see \ref{Fixed_gains}) but not of the discrete-time IEKF (see \ref{simus}).
\end{rem}

\subsubsection{Attitude estimation on a round rotating Earth}
\label{expl::1vect}
Another interesting case is  attitude estimation on ${SO}(3)$ using an observation of the vertical direction (given by an accelerometer) and taking into account the rotation of the earth. Whereas the flat earth assumption perfectly suits low-cost gyroscopes, precise gyroscopes can measure the complete attitude by taking into account the earth's rotation. We define a geographic frame with axis North-west-up. The attitude $R_t$ is the transition matrix from the body reference to the geographic reference. The earth instantaneous rotation vector $\upsilon \in \RR^3$ and the gravity vector $g \in \RR^3$ are expressed in the geographic reference. Both are constant in this frame. The gyroscope gives a continuous rotation speed $\omega_t$, disturbed by an isotropic continuous white noise $w_t^\omega$. The equation of the considered system reads:
\begin{equation*}
  \begin{aligned}
\dotex{R}_t & = (\upsilon)_\times  R_t + R_t (\omega_t + w_t^\omega)_\times, \\
Y_n & = R_{t_n}^T (g + V_n) 
  \end{aligned}
\end{equation*}
As $w_t^\omega$ is supposed isotropic the equation can be rewritten:
\begin{equation*}
  \begin{aligned}
\dotex{R}_t & = (\upsilon+\tilde w_t^\omega)_\times R_t + R_t (\omega_t)_\times, \\
Y_n & = R_{t_n}^T (g + V_n),
  \end{aligned}
\end{equation*}
which corresponds to \eqref{dynamic}-\eqref{eq2}.


\subsubsection{Attitude and velocity estimation}
We give here an example on a larger group. Consider the attitude $R_t \in SO(3)$ and speed $v_t \in \RR^3$ of an aircraft evolving on a flat earth, equipped with a gyroscope and an accelerometer. The gyroscopes gives continuous increments $\psi_t \in \RR^3$ with isotropic noise $w_t^\psi \in \RR^3$, and the accelerometers gives continuous increments $a_t \in \RR^3 $ with isotropic noise $w^a_t \in \RR^3$. The aircraft has noisy speed measurements in the earth reference frame $Y_n = v_{t_n}+V_n$, where $V_n$ is a supposed to be a Gaussian isotropic white noise. The equations read:
\begin{equation*}
  \begin{aligned}
\dotex{R}_t & = R_t (\psi_t + w_t^\psi)_\times, \\
\dotex{v}_t & = R_t (a_t+w_t^a) + g,
  \end{aligned}
\end{equation*}
where $R_t$ is the rotation mapping the body frame to the earth-fixed frame, $v_t$ is the velocity expressed in the earth fixed frame, and $g$ is the earth gravity vector, supposed to be (locally) constant.
Using the matrix Lie group $SE(3)$ we introduce:
\[
A_t=\left( \begin{array}{cc}
R_t^T & R_t^T v_t \\
0 & 1 \end{array} \right) ~ ; ~
\omega_t=\left( \begin{array}{cc}
0 & g \\
0 & 0 \end{array} \right) 
\]
\[
\upsilon_t=\left( \begin{array}{cc}
-(\psi_t)_\times & a_t \\
0 & 0 \end{array} \right) ~ ; ~ 
w_t=\left( \begin{array}{cc}
(w_t^\psi)_\times & w_t^a \\
0 & 0 \end{array} \right)
\]
The problem can then be rewritten under the form \eqref{dynamic}-\eqref{eq2}:
\begin{equation*}
  \begin{aligned}
\dotex{A}_t & = (\upsilon_t+w_t) A_t + A_t \omega_t, \\
Y_{t_n} & = A_{t_n}^{-1}  \left( \begin{array}{c}
V_n \\
1 \end{array} \right),
  \end{aligned}
\end{equation*}


\subsection{Interpretation of Langevin noises on Lie Groups}
\label{app::Strato}

This present subsection provides some mathematical considerations about white noises on Lie groups. It can be skipped  by the uninterested reader who is directly referred to the model \eqref{discretization}.

Equation \eqref{dynamic} is actually a Langevin equation that suffers from poly-interpretability because of its non-linearity, and its meaning  must be clarified in the rigorous framework of stochastic calculus. Stratonovich stochastic differential equations on a Lie group can be intrinsically defined as in e.g.  \cite{LevyLieGroups}. A somewhat simpler (but equivalent) approach consists of using the natural embedding of the matrix Lie group $G$ in a matrix space and to understand the equation in the sense of Stratonovich, see e.g. \cite{said2009estimation}. The mathematical reasons stem from the fact that the resulting stochastic process is well-defined  on the Lie group, whereas this is not the case when opting for an Ito interpretation as underlined by the following easily provable proposition.
\begin{prop}
If the stochastic differential equation \eqref{dynamic} is taken in the sense of It\^ o, for $G=SO(3)$ embedded in $\RR^{3\times 3}$, an application of the It\^o formula to $\chi_t^T \chi_t$ shows that the solution  almost surely leaves the submanifold $G$ for \emph{any} time $t>0$.
\end{prop}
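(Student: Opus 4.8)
The plan is to give \eqref{dynamic} a rigorous meaning as an Itô matrix stochastic differential equation and then to monitor the symmetric matrix $M_t := \chi_t^T\chi_t$, which equals $I_d$ exactly when $\chi_t\in SO(3)$. Reading the Langevin noise as left-multiplication by a Lie-algebra-valued Brownian motion, I would first recast \eqref{dynamic} in the Itô form
\begin{equation*}
d\chi_t = (\upsilon_t\chi_t + \chi_t\omega_t)\,dt + (dW_t)\,\chi_t,
\end{equation*}
where $W_t$ is a matrix Brownian motion valued in $\mathfrak{so}(3)$ with $d\langle W^i,W^j\rangle = \Sigma_t^{ij}\,dt$ in a fixed basis $(E_k)$ of $\mathfrak{so}(3)$. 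Existence of the solution as a matrix-valued process is not an issue, the equation being linear in $\chi_t$.

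Next I would apply the Itô product rule to $M_t=\chi_t^T\chi_t$, using $d\chi_t^T = (\chi_t^T\upsilon_t^T + \omega_t^T\chi_t^T)\,dt - \chi_t^T(dW_t)$, where the minus sign comes from the skew-symmetry $dW_t^T=-dW_t$ of the algebra. Two simplifications are expected. First, the two first-order (martingale) contributions $-\chi_t^T(dW_t)\chi_t$ and $+\chi_t^T(dW_t)\chi_t$ cancel, so $M_t$ carries no martingale part and evolves by a pure drift. Second, the deterministic part of this drift, $\chi_t^T(\upsilon_t^T+\upsilon_t)\chi_t + \omega_t^T M_t + M_t\omega_t$, loses its first summand because $\upsilon_t$ is skew-symmetric. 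The entire discrepancy with the group-preserving Stratonovich reading is therefore carried by the second-order Itô correction $-\chi_t^T\big(\sum_{i,j}E_iE_j\,\Sigma_t^{ij}\big)\chi_t\,dt$.

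Having isolated the offending term, I would project onto a scalar by taking the trace $\phi_t := \operatorname{tr}(M_t) = \|\chi_t\|_F^2$, which equals $3$ if and only if $\chi_t\in SO(3)$. The surviving drift $\omega_t^T M_t + M_t\omega_t$ is trace-free since $M_t$ is symmetric and $\omega_t$ is skew, so $\phi_t$ obeys the closed ordinary differential equation $d\phi_t = -\operatorname{tr}(C_t\,\chi_t\chi_t^T)\,dt$ with $C_t := \sum_{i,j}E_iE_j\Sigma_t^{ij}$. In the isotropic case $\Sigma_t=\sigma^2 I$ one computes $\sum_k E_k^2 = -2I$, which gives the explicit linear equation $d\phi_t = 2\sigma^2\phi_t\,dt$, hence $\phi_t = 3\,e^{2\sigma^2 t} > 3$ for every $t>0$ almost surely: the solution leaves $SO(3)$ instantly and never returns.

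The main obstacle is entirely the book-keeping of this second-order term, since it is the one place where the naive (constraint-preserving) cancellation fails. One must correctly read off the quadratic covariation $(dW_t)(dW_t)=C_t\,dt$ and verify that, unlike the first-order terms, its contraction $\operatorname{tr}(C_t\chi_t\chi_t^T)$ does not vanish. For a general diffusion $\Sigma_t\succeq 0$ this amounts to checking that $C_t$ is symmetric and negative semidefinite — diagonalizing $\Sigma_t$ so that $C_t=\sum_k d_k F_k^2$ with the $F_k$ skew-symmetric and $d_k\ge 0$, and using that the square of a skew matrix is negative semidefinite — so that $-\operatorname{tr}(C_t\chi_t\chi_t^T)\ge 0$, strictly positive whenever the noise is nondegenerate and $\chi_t$ is invertible. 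The clean exponential growth above is then the isotropic specialization.
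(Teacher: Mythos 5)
Your computation is correct and is exactly the argument the paper intends (the paper omits the proof as ``easily provable''): pass to the It\^o form, apply the product rule to $\chi_t^T\chi_t$, note that the first-order noise terms cancel by skew-symmetry while the second-order correction $-\chi_t^T C_t\chi_t\,dt$ with $C_t\preceq 0$ survives, and conclude that $\operatorname{tr}(\chi_t^T\chi_t)$ strictly increases from $3$, so $\chi_t\notin SO(3)$ for all $t>0$ a.s. One cosmetic caveat: $\operatorname{tr}(\chi^T\chi)=3$ does \emph{not} characterize membership in $SO(3)$ (your ``if and only if'' is an overstatement), but your argument only uses the true implication $\chi\in SO(3)\Rightarrow\operatorname{tr}(\chi^T\chi)=3$, so the proof stands.
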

Besides, the physical reasons for this stem from the fact that the sensors' noise are never completely white, and for colored noise the Stratonovich interpretation provides a better approximation to the true solution than Ito's, as advocated by the  result of Wong and Zakai \cite{wong1969riemann}.


\subsection{Exact discretization of the considered model}
\label{section::discretization}
To treat rigorously the problem of integrating discrete measurements we need to discretize the continuous model with the same time step as the measurements'.
Unlike the general case of non-linear estimation,  the  exact discrete-time  dynamics corresponding to Equation \eqref{dynamic} can be obtained, as proved by the following  result:
\begin{thm}
\label{discrete_form}
Let $\chi_n=\chi_{t_n}$. Then the discrete system $(\chi_n,Y_n)$ satisfies the following equations:
 \begin{equation}
 \label{discretization}
  \begin{aligned}
  \chi_{n+1} & = \Upsilon_n W_n \chi_n \Omega_n, \\
  Y_n & = h(\chi_n,V_n),
   \end{aligned}
 \end{equation}
where $W_n$ is a random variable with values in $G$ and whose law depends on the values taken by $\upsilon_t$ for $t \in [t_n, t_{n+1}]$ and on the law of $w_t$ for $t \in [t_n, t_{n+1}]$, and $\Upsilon_n$ and $\Omega_n$ are elements of $G$ which only depend on the values taken respectively by $\upsilon_t$ and $\omega_t$ for $t \in [t_n, t_{n+1}]$.
\end{thm}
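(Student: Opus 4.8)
The plan is to integrate the continuous dynamics \eqref{dynamic} over $[t_n,t_{n+1}]$ by successively peeling off the right input $\omega_t$, then the left deterministic input $\upsilon_t$, then the left noise $w_t$. The structural fact I would exploit is that the left-multiplicative terms $(\upsilon_t+w_t)\chi_t$ and the right-multiplicative term $\chi_t\omega_t$ act independently, so three changes of variable decouple them. Throughout, I read \eqref{dynamic} in the Stratonovich sense, so that the ordinary Leibniz rule applies to products of processes and every auxiliary flow stays on $G$ by the same mechanism as in the preceding proposition; this is what makes the purely formal manipulations legitimate.

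First I would introduce the right flow $\Omega_t$ solving $\dotex{\Omega}_t=\Omega_t\omega_t$ with $\Omega_{t_n}=I_d$, a deterministic equation depending only on $\omega$ on the interval. Setting $\chi_t=\xi_t\Omega_t$ and applying the product rule, the term $\chi_t\omega_t$ cancels and $\xi_t$ satisfies the purely left-multiplicative equation $\dotex{\xi}_t=(\upsilon_t+w_t)\xi_t$ with $\xi_{t_n}=\chi_n$. Second, I would introduce the deterministic left flow $\Upsilon_t$ solving $\dotex{\Upsilon}_t=\upsilon_t\Upsilon_t$, $\Upsilon_{t_n}=I_d$, depending only on $\upsilon$. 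Writing $\xi_t=\Upsilon_t\zeta_t$, the $\upsilon$-drift cancels and $\zeta_t$ obeys $\dotex{\zeta}_t=\tilde w_t\zeta_t$ with the transformed noise $\tilde w_t:=\Upsilon_t^{-1}w_t\Upsilon_t$ and $\zeta_{t_n}=\chi_n$. Finally, factoring the initial condition as $\zeta_t=W_t\chi_n$ yields a process $W_t$ solving $\dotex{W}_t=\tilde w_t W_t$, $W_{t_n}=I_d$, whose law depends only on $\upsilon$ (through $\Upsilon_t$ inside $\tilde w_t$) and on the law of $w$.

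Assembling the three substitutions gives $\chi_t=\Upsilon_t W_t\chi_n\Omega_t$; evaluating at $t=t_{n+1}$ and setting $\Upsilon_n:=\Upsilon_{t_{n+1}}$, $W_n:=W_{t_{n+1}}$, $\Omega_n:=\Omega_{t_{n+1}}$ gives the first line of \eqref{discretization} with exactly the claimed dependences, while the observation equation is unchanged since $Y_n=h(\chi_{t_n},V_n)=h(\chi_n,V_n)$. Note that $\Upsilon_n$ and $\Omega_n$ depend solely on $\upsilon_t$ and $\omega_t$ respectively, and $W_n$ on $\upsilon_t$ and the law of $w_t$, as required.

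The hard part will not be the algebra but its stochastic justification: each change of variable is a product of a possibly random process with another, so its validity rests on the Stratonovich Leibniz rule rather than the It\^o rule, which would introduce quadratic-covariation corrections and, as the preceding proposition shows, would even drive the solution off $G$. I would therefore state explicitly that all products are differentiated in the Stratonovich sense, invoke existence and uniqueness of the linear Stratonovich equation for $W_t$ on $G$, and check that $\tilde w_t=\Upsilon_t^{-1}w_t\Upsilon_t$ is again a well-defined $\mathfrak{g}$-valued Langevin noise, so that $W_t$ genuinely remains in $G$ and $W_n$ is a bona fide $G$-valued random variable.
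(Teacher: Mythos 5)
Your proposal is correct and follows essentially the same route as the paper: both rest on the same three auxiliary flows $\Upsilon_t$, $\Omega_t$, $W_t$ with identical defining equations (including the transformed noise $\Upsilon_t^{-1}w_t\Upsilon_t$ driving $W_t$) and the same Stratonovich product-rule computation. The only difference is presentational—you peel the factors off one at a time by successive changes of variable, whereas the paper posits the ansatz $\chi_{t|t_n}=\Upsilon_t W_t\chi_{t_n}\Omega_t$ and verifies it satisfies \eqref{dynamic}—so the two arguments are the same proof read in opposite directions.
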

\begin{proof}
For $n \in \mathbb{N}$ consider the value of the process $\chi_t$ is known until time $t_n$. Let $\Upsilon_t$ and $\Omega_t$ be the solutions of the following equations:
\begin{align*}
 & \Upsilon_{t_n}  = I_d, ~ \frac{d}{dt} \Upsilon_t  = \upsilon_t \Upsilon_t \qquad \\
 \text{and} ~ & \Omega_{t_n} = I_d, ~  
  \frac{d}{dt} \Omega_t = \Omega_t \omega_t
\end{align*}
Let $W_t$ be the solution of the following (Stratonovich) stochastic differential equation: 
\begin{align*}
W_{t_n} = I_d, \hspace{1 cm} \frac{d}{dt} W_t = \Upsilon_t^{-1 } w_t \Upsilon_t W_t
\end{align*}
Note that $\Upsilon_t^{-1 } w_t \Upsilon_t$ being in $\frak{g}$, $W_t$ is ensured to stay in $G$. Define the process $\chi_{t|t_n}=\Upsilon_t W_t \chi_{t_n} \Omega_t$. We will show that for $t>t_n$ the processes $\chi_t$ and $\chi_{t|t_n}$ verify the same stochastic differential equation. Indeed:
\begin{align*}
\frac{d}{dt} \chi_{t|t_n} & = (\frac{d}{dt} \Upsilon_t) W_t \chi_t \Omega_t + \Upsilon_t (\frac{d}{dt}W_t) \chi_t \Omega_t + \Upsilon_t W_t \chi_t (\frac{d}{dt}\Omega) \\
 & = \upsilon_t \Upsilon_t W_t \chi_t \Omega_t + w_t \Upsilon_t W_t \chi_t \Omega_t + \Upsilon_t W_t \chi_t \Omega_t \omega_t \\
 & = (\upsilon_t+w_t)\chi_{t|t_n} +\chi_{t|t_n} \omega_t
\end{align*}
Thus the two processes have the same law at time $t_{n+1}$, i.e. $\chi_{n+1}$ and $W_{t_{n+1}} \Upsilon_{t_{n+1}} \chi_{t_n}\Omega_{t_{n+1}}$ have the same law. Letting $\Upsilon_n=\Upsilon_{t_{n+1}}$, $\Omega_n=\Omega_{t_{n+1}}$ and $W_n=W_{t_{n+1}}$ we obtain the result. 
\end{proof}
\begin{rem}
In many practical situations (for instance examples \ref{expl::2vect} and \ref{expl::1vect}), the Langevin noise $w_t$ is isotropic and we have thus $\Upsilon_t^{-1}w_t\Upsilon_t \overset{\mathcal{L}}{=} w_t$. Note that, the variable $W_t$ depends also only on the law of $w_t$ for $t \in [t_n, t_{n+1}]$.
\end{rem}
In the sequel, for mathematical reasons (the equations do not suffer from poly-interpretability), tutorial reasons (the framework of diffusion processes on Lie groups needs not be known), and practical reasons (any filter must be implemented in discrete time),  we will systematically consider the discrete-time model \eqref{discretization}. Moreover, the noise $W_n$ will be a general random variable in $G$, not necessary a solution of the stochastic differential equation $\frac{d}{dt}W_t=\Upsilon_t^{-1} w_t \Upsilon_t W_t$.


\section{A class of discrete-time intrinsic filters}
\label{InvariantFiltering}


\subsection{Preliminary: Kalman filter for linear stationary systems with drift }

\label{linear_case}

Consider the particular case of a linear system in $\RR^N$ (with arbitrary $N$) of the form $\frac{d}{dt} X_t = b_t + w_t$, $w_t$ being a white noise and $b_t$ a known deterministic input. Assume we have discrete measurements of the form $Y_n = H X_{t_n} + V_n \in \RR^p$, $V_n$ being a Gaussian noise in $\RR^p$. A straightforward discretization yields:
\begin{equation}
\label{separation_obs}
\begin{aligned}
X_{n+1} & = X_n + B_n+ W_n, \\
Y_n & = H X_n + V_n,
\end{aligned}
\end{equation}
With $B_n=\int_{t_n}^{t_{n+1}} b_s ds$ and $W_n=\int_{t_n}^{t_{n+1}} w_s ds$ is the value of a Brownian motion in $\RR^N$. The optimal filter in this case is the celebrated linear Kalman filter whose equations are:
\begin{equation}
\label{linear_estimator}
\begin{aligned}
\hat{X}'_{n+1} & =\hat{X}_n+B_n, \\
\hat{X}_{n+1}  & =\hat{X}'_{n+1}+K_{n+1}(Y_{n+1} - H\hat{X}'_{n+1})
\end{aligned}
\end{equation}
The first equation is the so-called prediction step, and the second one the correction step. 
Let $e'_n=\hat{X}'_n-X_n$ and $e_n=\hat{X}_n-X_n$ denote the predicted and corrected state estimation errors, we get:
\begin{equation}
\begin{aligned}
e'_{n+1} & = e_n-W_n, \\
e_{n+1} & = e'_{n+1}-K_{n+1}(H e'_{n+1}-V_{n+1})
\end{aligned}
\end{equation}
The error equation is autonomous and can be optimized independently from the dynamical inputs $B_n$'s. As a consequence, the optimal gains $K_n$ can be computed off-line. It is a well-known result that they (as well as the error distribution) converge under observability and detectability  assumptions, so all computations can be done off-line and  only the first values need be stored, in which case only the equations \eqref{linear_estimator} must be computed on-line, requiring extremely restricted computational power on board. Mimicking the form of equations  \eqref{linear_estimator} on the Lie group $G$ where the addition is naturally replaced with the group multiplication law, leads to a similar result, as shown in the next section.


\subsection{Proposed intrinsic filters}
\label{Filter_definition}
Inspired by the linear case and the theory of continuous-time  symmetry-preserving observers on Lie groups \cite{arxiv-08} we propose to mimic the prediction and update steps \eqref{linear_estimator} replacing the addition with the group multiplication yielding the class of filters defined by:
\begin{align}
\hat{\chi}'_{n+1} & = \Upsilon_n \hat{\chi}_n \Omega_n, \label{non_linear_estimator1} \\
\hat{\chi}_{n+1} & = K_{n+1}(\hat{\chi'}_{n+1}Y_{n+1}) \hat{\chi'}_{n+1}, \label{non_linear_estimator2}
\end{align}
where $K_{n+1}(\bullet)$ can be any function of $\mathcal{Y} \rightarrow G$,  ensuring $K(h(I_d,0))=I_d$  . There are two ways to understand the links between \eqref{linear_estimator} and \eqref{non_linear_estimator1}-\eqref{non_linear_estimator2}. First, by defining a an estimation error on the group $G$ by $\chi_{n+1} \hat{\chi}_{n+1}^{-1}$ (see below), and by using the left-right equivariance hypothesis, which allows to interpret $\hat{\chi'}_{n+1}Y_{n+1}=h(\chi_{n+1} \hat{\chi'}_{n+1}^{-1}, V_{n+1})$  as a group equivalent to $H(X_{n+1}' - \hat{X}_{n+1})+V_{n+1}$. Secondly, by viewing the linear case of Section \ref{linear_case} as a specific case of the proposed approach through the following proposition whose proof has been moved to the Appendix. 

\begin{prop}
\label{isomorphism}
There exists an isomorphic representation of $\RR^N$ as a matrix Lie group such that \eqref{separation_obs} takes the canonical form \eqref{dynamic}-\eqref{eq2}, and \eqref{linear_estimator} becomes the invariant filter \eqref{non_linear_estimator1}-\eqref{non_linear_estimator2}.
\end{prop}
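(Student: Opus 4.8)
The plan is to realize the additive group $(\RR^N,+)$ as the abelian matrix Lie group of block matrices
\[
\Phi(X)=\begin{pmatrix} I_N & X\\ 0 & 1\end{pmatrix},\qquad X\in\RR^N,
\]
and then to check that every object appearing in \eqref{separation_obs}--\eqref{linear_estimator} transports to its group counterpart under $\Phi$. First I would verify that $\Phi$ is a group isomorphism onto its image: the single identity $\Phi(X)\Phi(X')=\Phi(X+X')$ shows that $\Phi$ intertwines vector addition with matrix multiplication, that the neutral element $0$ maps to $I_d=I_{N+1}$, and that $\Phi(X)^{-1}=\Phi(-X)$.

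Next I would put the model into the canonical form \eqref{dynamic}--\eqref{eq2}. Writing $\chi_t=\Phi(X_t)$ and setting $\omega_t=0$, $\upsilon_t=\Phi_*(b_t)$ and $\tilde w_t=\Phi_*(w_t)$, where $\Phi_*(v)=\begin{pmatrix}0&v\\0&0\end{pmatrix}$ is the induced Lie-algebra embedding, one checks that $(\upsilon_t+\tilde w_t)\chi_t=\begin{pmatrix}0& b_t+w_t\\0&0\end{pmatrix}$ because the upper-left block of $\upsilon_t+\tilde w_t$ vanishes; hence $\dotex\chi_t=(\upsilon_t+\tilde w_t)\chi_t+\chi_t\omega_t$ reproduces $\dotex X_t=b_t+w_t$. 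For the output I would take $\mathcal Y=\RR^p$ with $h(\chi,V)=HX+V$ for $\chi=\Phi(X)$, together with the transformations $\rho_{\Phi(Z)}:y\mapsto y-HZ$. The key verification here is Assumption~1: for $g=\Phi(Z)$ the relation $\Phi(X)\Phi(Z)=\Phi(X+Z)$ gives $h(\chi g,V)=H(X+Z)+V=\rho_{g^{-1}}\bigl(h(\chi,V)\bigr)$, so equivariance holds exactly, hence both in law and for $V=0$, and a direct computation shows that $\rho$ is a genuine left action. Applying Theorem~\ref{discrete_form} with $\Omega_n=I_d$, $\Upsilon_n=\Phi(B_n)$ and $W_n=\Phi\bigl(\int_{t_n}^{t_{n+1}}w_s\,ds\bigr)$ then turns \eqref{discretization} into exactly \eqref{separation_obs}.

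Finally I would match the two filters. The prediction \eqref{non_linear_estimator1} gives $\hat\chi'_{n+1}=\Phi(B_n)\Phi(\hat X_n)=\Phi(\hat X_n+B_n)=\Phi(\hat X'_{n+1})$, which is the Kalman prediction step. For the correction, the action computation $\hat\chi'_{n+1}Y_{n+1}=\rho_{\hat\chi'_{n+1}}(Y_{n+1})=Y_{n+1}-H\hat X'_{n+1}$ shows that the group innovation coincides with the linear innovation. I would then define the gain map $K_{n+1}:\RR^p\to G$ by $K_{n+1}(z)=\Phi(K^{\mathrm{lin}}_{n+1}z)$, where $K^{\mathrm{lin}}_{n+1}$ is the Kalman gain matrix; this satisfies the required normalization $K_{n+1}(h(I_d,0))=K_{n+1}(0)=\Phi(0)=I_d$. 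Substituting into \eqref{non_linear_estimator2} yields $\hat\chi_{n+1}=\Phi(K^{\mathrm{lin}}_{n+1}z)\Phi(\hat X'_{n+1})=\Phi(\hat X'_{n+1}+K^{\mathrm{lin}}_{n+1}z)$, i.e. precisely the Kalman correction in \eqref{linear_estimator}.

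The computations are routine once the representation is fixed, and I expect the only genuinely delicate point to be the choice of the output map $h$ and of its associated left action, arranged so that Assumption~1 holds and so that the group innovation $\hat\chi'_{n+1}Y_{n+1}$ reproduces the linear innovation $Y_{n+1}-H\hat X'_{n+1}$. Everything else then follows mechanically from the single algebraic identity $\Phi(X)\Phi(X')=\Phi(X+X')$, which converts each group multiplication in \eqref{non_linear_estimator1}--\eqref{non_linear_estimator2} into the corresponding vector addition in \eqref{linear_estimator}.
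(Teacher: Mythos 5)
Your proposal is correct and follows essentially the same route as the paper: an affine-type matrix embedding of $(\RR^N,+)$ into $\mathcal{M}_{N+1}(\RR)$ under which addition becomes multiplication and the Kalman prediction/correction become the group filter. The only (cosmetic) differences are that the paper uses the transposed lower-triangular embedding, places $B_n$ in $\Omega_n$ rather than $\Upsilon_n$ (immaterial since the group is abelian), and realizes the output action by embedding $\mathcal Y$ into a matrix space so that the action is literal matrix multiplication, whereas you keep $\mathcal Y=\RR^p$ and define the action $y\mapsto y-HZ$ explicitly.
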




We now define the invariant output errors, which are a transposition of the linear error to our multiplicative group:
\begin{align}
\eta_n =\chi_n\hat{\chi}_n^{-1}, \hspace{1 cm} \eta'_n =  \chi_n \hat{\chi'}_n^{-1}
\end{align}
We have the following striking property, that is similar to the linear case:
\begin{thm}
The error variables $\eta_n$ and $\eta'_n$ are Markov processes, and are independent of the inputs $\Omega_n$'s.
\end{thm}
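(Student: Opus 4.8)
The plan is to derive explicit recursions for the two error processes and read off both the Markov property and the independence from the $\Omega_n$'s directly. First I would substitute the exact dynamics \eqref{discretization} and the prediction step \eqref{non_linear_estimator1} into the definition of the predicted error. Using $\chi_{n+1}=\Upsilon_n W_n \chi_n \Omega_n$ and $\hat{\chi}'_{n+1}=\Upsilon_n \hat{\chi}_n \Omega_n$, a direct computation gives
\begin{align*}
\eta'_{n+1} = \chi_{n+1}(\hat{\chi}'_{n+1})^{-1} = \Upsilon_n W_n \chi_n \Omega_n\, \Omega_n^{-1} \hat{\chi}_n^{-1} \Upsilon_n^{-1} = \Upsilon_n W_n \eta_n \Upsilon_n^{-1}.
\end{align*}
The crucial observation is that the two factors $\Omega_n$ cancel, so $\eta'_{n+1}$ is a function of the previous error $\eta_n$, the deterministic input $\Upsilon_n$, and the fresh process noise $W_n$ only; in particular it does not involve $\Omega_n$.

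Next I would treat the correction step. Inverting \eqref{non_linear_estimator2} and inserting it into the definition of $\eta_{n+1}$ yields
\begin{align*}
\eta_{n+1} = \chi_{n+1}\hat{\chi}_{n+1}^{-1} = \eta'_{n+1}\, K_{n+1}(\hat{\chi}'_{n+1} Y_{n+1})^{-1}.
\end{align*}
It then remains to show that the argument $\hat{\chi}'_{n+1} Y_{n+1}$ depends on the state only through $\eta'_{n+1}$. Here I would invoke the left-right equivariance of Assumption 1. Writing $\chi_{n+1}=\eta'_{n+1}\hat{\chi}'_{n+1}$ and $Y_{n+1}=h(\chi_{n+1},V_{n+1})$, the equivariance applied with $\chi=\eta'_{n+1}$ and $g=\hat{\chi}'_{n+1}$ gives
\begin{align*}
\hat{\chi}'_{n+1} Y_{n+1} &= \hat{\chi}'_{n+1}\, h(\eta'_{n+1}\hat{\chi}'_{n+1}, V_{n+1}) \\
&\overset{\mathcal{L}}{=} \hat{\chi}'_{n+1} (\hat{\chi}'_{n+1})^{-1} h(\eta'_{n+1}, V_{n+1}) = h(\eta'_{n+1}, V_{n+1}).
\end{align*}
Hence the conditional law of $\eta_{n+1}$ given the past depends only on $\eta'_{n+1}$ and the fresh measurement noise $V_{n+1}$, and again not on $\Omega_n$.

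Combining the two recursions, the conditional law of $\eta_{n+1}$ given the filtration generated by the past depends only on $\eta_n$ together with the independent noises $W_n$ and $V_{n+1}$ and the deterministic input $\Upsilon_n$; the analogous statement holds for $\eta'_n$. Invoking the independence of the noise sequence from the past error values, this establishes that $(\eta_n)$ and $(\eta'_n)$ are Markov chains whose transition kernels are free of the inputs $\Omega_n$. The step I expect to require the most care is the equivariance argument: since Assumption 1 provides only an equality \emph{in law}, I must argue at the level of conditional distributions, freezing the past-measurable quantities $\eta'_{n+1}$ and $\hat{\chi}'_{n+1}$ and exploiting the independence of $V_{n+1}$, rather than performing a pathwise substitution. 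I must also verify that $W_n$ and $V_{n+1}$ are genuinely independent of the past error values, so that the resulting conditional law is a bona fide Markov transition kernel.
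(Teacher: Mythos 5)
Your proposal is correct and follows essentially the same route as the paper's proof: the cancellation of $\Omega_n$ in $\eta'_{n+1}=\Upsilon_n W_n \eta_n \Upsilon_n^{-1}$, followed by the left-right equivariance applied with $\chi=\eta'_{n+1}$ and $g=\hat{\chi}'_{n+1}$ to rewrite $\hat{\chi}'_{n+1}Y_{n+1}$ in law as $h(\eta'_{n+1},V_{n+1})$. Your added remarks on arguing at the level of conditional distributions and on the independence of the fresh noises are sensible refinements of the same argument, not a different approach.
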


\begin{proof}

The equations followed by $\eta_n'$ and $\eta_n$ read:
\begin{equation} \label{Markov1} \eta'_{n+1} =\chi_{n+1}\hat{\chi'}_{n+1}^{-1} 
= \Upsilon_n W_n \chi_n \Omega_n \Omega_n^{-1} \hat{\chi}_n^{-1}\Upsilon_n^{-1}  = \Upsilon_n W_n \eta_n \Upsilon_n^{-1} \end{equation} and:
\begin{equation}
\label{Markov2}
\begin{aligned} \eta_{n+1} & = \chi_{n+1} \hat{\chi'}_{n+1}^{-1} K_{n+1}(\hat{\chi'}_{n+1}Y_{n+1})^{-1}  \\
 & = \eta'_{n+1} K_{n+1}(\hat{\chi'}_{n+1}h({\chi}_{n+1},V_{n+1}))^{-1} \\
 & \overset{\mathcal{L}}{=}  \eta'_{n+1} K_{n+1}(h(\eta'_{n+1},V_{n+1}))^{-1},
\end{aligned}
\end{equation}
thanks to the equivariance property of the output.
\end{proof}

The most important consequence of this property is that if the inputs $\upsilon_t$ are known in advance, or are fixed, as it is the case in Examples \ref{expl::2vect} and \ref{expl::1vect}, the gain functions $K_n$ can be optimized off-line, independently of the trajectory followed by the system. In any case, numerous choices are possible to tune the gains $K_n$ and the remaining  sections are all devoted to various  types of methods to tackle this problem.

\section{Fixed gains filters}
\label{Fixed_gains}

In certain cases, one can build an (almost) globally convergent observer for the associated deterministic system, i.e., with noise turned off, by using a family of  constant gain function $K_n( \bullet ) \equiv K( \bullet )$. If the filter with noise turned off has the desirable  property of forgetting its initial condition, convergence to a single point is impossible to retain, because of the unpredictability of the noises that ``steer" the system, but convergence of the distribution can be expected, assuming as in the linear case that $\Upsilon_n$ is independent of $n$. Indeed in this section we prove that, when noise is turned on, the error forgets its initial distribution under mild conditions.  The results are illustrated by an attitude estimation example for which we propose an intrinsic filter having strong convergence properties. 

It should be noted that in practice, the convergence to an invariant distribution  allows in turn to pick the most desirable gain $K( \bullet )$ among the family based on a performance criterion, such as   convergence speed, or filter's precision (that is, ensuring low error's dispersion). This fact will be illustrated by an artificial horizon example of Subsection \ref{Artificial_horizon}.

\subsection{Convergence results}
\label{cv_results}
Here the left-hand inputs $\Upsilon_n$ are assumed fixed. This, together with constant gains $K_n$, makes the error sequence $\eta_n$ a homogeneous Markov chain. Thus, under appropriate technical conditions, the chain has a unique stationary distribution and the sequence converges in distribution to this invariant
distribution. Let $d$ denote a right-invariant distance on the group $G$. We propose the following assumptions:
\begin{enumerate}
\item Confinement of the error: there exists a compact set $C$ such that $\forall n\in \mathbb{N}, \eta_n \in C $ a.s. for any $\eta_0\in C$\label{prop::compact}
\item Diffusivity: the process noise has a continuous part with respect to Haar measure, with density positive and uniformly bounded from zero in a ball of radius $\alpha>0$ around $I_d$. \label{prop::diff}
\item Reasonable output noise: $\forall g \in G$,  $\mathbb{P}[g K(h(g,V_n))^{-1} \in \mathcal{B}_o(g K(h(g,0))^{-1},\frac{\alpha}{2})] > \epsilon'$ for some $\epsilon'>0$. \label{prop::output}
%
%
\end{enumerate}
The second assumption implies, and can in fact be replaced with, the more general technical assumption that there exists $\epsilon >0$ such that for any subset $U$ of the ball $ \mathcal{B}_o  (I_d,\alpha)$ we have $P(W_n \subset U) > \epsilon \mu(U)$ for all $n\geq 0$ where $\mu$ denotes the Haar measure. Those noise properties are relatively painless to verify, whereas the confinement property although stronger is automatically verified whenever $G$ is compact, e.g. $G=SO(3)$. 
Intuitively,  the last two assumptions guarantee the  error process is well approximated by its dynamics with noise turned off, followed by a small diffusion. In the theory of Harris chains, the latter diffusion step is a key element to allow probability laws to mix at each step and eventually forget their initial distribution.  
\begin{thm}
 \label{thm::global}
For constant left-hand inputs $\Upsilon_n \equiv \Upsilon$, consider the filter:
\begin{align}
\hat{\chi}'_{n+1} & = \Upsilon \hat{\chi}_n \Omega_n \\
\hat{\chi}_{n+1} & = K(\hat{\chi'}_{n+1}Y_{n+1}) \hat{\chi'}_{n+1}
\end{align}Suppose that Assumptions 1)-2)-3) are verified, where the compact set satisfies $C=cl(C^o)$, $cl$ denoting the closure and $^o$ the interior. When noises are turned off, the error equation \eqref{Markov1}-\eqref{Markov2} becomes:
\begin{equation}
\label{noiseless}
\begin{aligned}
\gamma'_{n+1} & = \Upsilon \gamma_n \Upsilon^{-1}, \\
\gamma_{n+1} &  {=}  \gamma'_{n+1} K_{n+1}(h(\gamma'_{n+1},0))^{-1} ,
\end{aligned}
\end{equation}
Suppose that for any $\gamma_0 \in C$, except on a set of null Haar measure, $\gamma_n$ converges to $I_d$. Then there exists a unique stationary distribution $\pi$ on $G$ such that for any prior law $\mu_0$ of the error $\eta_0$ supported by $C$, the law $(\mu_n)_{n\geq 0}$ of $(\eta_n)_{n\geq 0}$ satisfies the total variation (T.V.) norm convergence property: $$\lim_n~\norm{\mu_n-\pi}_{T.V.} \to 0$$
\end{thm}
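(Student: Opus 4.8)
The plan is to recognize the homogeneous Markov chain $(\eta_n)$ on the compact state space $C$ as a uniformly ergodic (Doeblin) chain, and to reduce the claimed total variation convergence to the verification of a single uniform minorization condition. Concretely, I would aim to produce an integer $m$, a constant $\beta\in(0,1)$, and a probability measure $\nu$ on $G$ such that the $m$-step kernel satisfies $P^m(x,\cdot)\geq \beta\,\nu(\cdot)$ for \emph{every} $x\in C$. Once this is established, the classical Doeblin theorem yields simultaneously the existence and uniqueness of an invariant probability $\pi$ and the geometric bound $\norm{\mu_n-\pi}_{T.V.}\leq (1-\beta)^{\lfloor n/m\rfloor}$, which gives the stated convergence. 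Confinement (Assumption 1) is precisely what lets us treat $C$ as the whole state space, so no separate tightness or drift argument toward a compact set is needed; the Feller property, used only for the auxiliary Krylov--Bogolyubov existence argument if one prefers it, follows from the continuity of the maps defining \eqref{Markov1}--\eqref{Markov2}.

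The minorization would be built in two phases, using $I_d$ as the regeneration point. \emph{Regeneration near $I_d$.} Fix a small radius $r$. For $y$ in the ball $\mathcal{B}_o(I_d,r)$, the next-step law of $\eta$ is obtained by first applying the diffusion, $\eta'=\Upsilon W y \Upsilon^{-1}$, then the correction $\eta'K(h(\eta',V))^{-1}$. Assumption 2 gives $W$ an absolutely continuous component whose density is bounded below on $\mathcal{B}_o(I_d,\alpha)$; since conjugation by $\Upsilon$ is a smooth measure-equivalence on the compact $C$, $\eta'$ inherits a density bounded below on a fixed neighborhood of $\Upsilon y\Upsilon^{-1}$. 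Assumption 3, together with the normalization $K(h(I_d,0))=I_d$ and the continuity of $y\mapsto yK(h(y,0))^{-1}$, then confines the corrected value to a ball of radius $\alpha/2$ around a point close to $I_d$ with probability at least $\epsilon'$. Combining the two lower bounds produces a constant $\beta_0>0$ and a probability measure $\nu_0$, which I would take to be normalized Haar measure restricted to a fixed small ball around $I_d$, such that $P(y,\cdot)\geq \beta_0\,\nu_0$ for all $y\in\mathcal{B}_o(I_d,r)$.

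\emph{Uniform drift into $\mathcal{B}_o(I_d,r)$.} The remaining and most delicate step is to show that there exist $N$ and $\delta>0$ with $\inf_{x\in C}\mathbb{P}_x[\eta_N\in\mathcal{B}_o(I_d,r)]\geq\delta$. The noiseless hypothesis gives $\Phi^n(\gamma_0)\to I_d$ for $\gamma_0\in C$ outside a Haar-null set, where $\Phi$ is the one-step map in \eqref{noiseless}. Egorov's theorem upgrades this to uniform convergence on a subset $C'\subseteq C$ with $\mu(C\setminus C')$ as small as desired, so some $N_0$ sends $C'$ into $\mathcal{B}_o(I_d,r/2)$. The noisy chain is then handled by a shadowing argument: at each step the noise can be small (the event $\{W_n\in\mathcal{B}_o(I_d,\rho)\}$ has probability bounded below by Assumption 2, and the output noise stays controlled by Assumption 3), and by uniform continuity of the maps on $C$ a trajectory subject to small noise stays within $r/2$ of the deterministic one over $N_0$ steps. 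The crucial use of \emph{diffusivity} is to neutralize the exceptional null set: after a single step the law of $\eta_1$ is absolutely continuous with a density bounded below on an open set, so with uniform positive probability $\eta_1$ lands in $C'$, from which the deterministic convergence applies. Intersecting these finitely many positive-probability events yields the uniform lower bound $\delta$.

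Chaining the two phases gives $P^{N+1}(x,\cdot)\geq \delta\beta_0\,\nu_0$ for all $x\in C$, i.e.\ the Doeblin condition with $m=N+1$, $\beta=\delta\beta_0$ and $\nu=\nu_0$, and the conclusion follows. I expect the genuine difficulty to lie entirely in the uniform drift step: reconciling the \emph{almost everywhere} deterministic convergence (with its unavoidable exceptional null set) with the requirement of a lower bound that is \emph{uniform over all starting points} $x\in C$. The mechanism that makes this possible is the smoothing effect of the diffusion, which both erases the null set after one step and, via its density bounded below, guarantees that $C'$ is reached with probability bounded away from zero; the careful bookkeeping of these estimates, rather than any single deep idea, is where the work concentrates.
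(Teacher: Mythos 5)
Your overall architecture (minorization near $I_d$ plus a step that brings the chain near $I_d$ from anywhere in $C$ with controlled probability) is sound, and your Phase 1 is essentially the paper's one-step bound $P(q_z(x)\in V)\geq \epsilon''\mu_G(V)$ for $V\subset\mathcal{B}_o(q_0(x),\alpha')$, obtained by combining Assumptions 2 and 3. The genuine gap is in your ``uniform drift'' phase, specifically in the shadowing argument. You need the noisy trajectory to stay within $r/2$ of the deterministic one over $N_0$ steps, which requires the per-step perturbation to be made smaller than some $\rho$ depending on $r$, $N_0$ and the Lipschitz constants of the iterated map (these deviations compound, and $q_0$ may expand distances). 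For the process noise this is fine: Assumption 2 gives $\mathbb{P}(W_n\in\mathcal{B}_o(I_d,\rho))>0$ for any $\rho\leq\alpha$. But Assumption 3 only guarantees, with probability $\epsilon'$, that the correction step lands within the \emph{fixed} radius $\alpha/2$ of its noiseless counterpart; it gives no control at any finer scale. So you cannot force the output-noise-induced deviation below $\alpha/2$ per step, and after $N_0$ steps the accumulated, Lipschitz-amplified error is of order $N_0 L^{N_0}\alpha$, which cannot be pushed under $r/2$. The shadowing step therefore fails as stated.

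The paper's proof replaces trajectory shadowing by a construction that tolerates $O(\alpha)$ deviations per step by design: it sets $U_0=\{I_d\}$, $U_n'=\{x: d(x,U_n)<\alpha/2\}\cap C$, $U_{n+1}=q_0^{-1}(U_n')$, and shows that from any $x\in U_{n+1}$ the one-step kernel's density lower bound on $\mathcal{B}_o(q_0(x),\alpha')$ steers the chain back \emph{into} $U_n$ with probability $\geq\epsilon_n>0$ (the diffusion actively re-centers the chain rather than being required to be small). The a.e.\ deterministic convergence is used only to show $U_N=C$ for some $N$, via the increasing sets $D_n=\{x:\forall k\geq n,\ q_0^k(x)\in U_0'\}$ whose measure tends to $\mu_G(C)$ --- this plays the role of your Egorov step and is unproblematic. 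The paper then concludes with Harris-chain theory (a.s.\ finite hitting time of $U_1$ via Borel--Cantelli, plus the minorization from $U_1$ into $U_0'$), which yields T.V.\ convergence without claiming a rate. Note that your stronger Doeblin target is in fact attainable here --- chaining the bounds $\mathbb{P}(\eta_{1}\in U_{n}\mid\eta_0\in U_{n+1})\geq\epsilon_n$ over $n=N-1,\dots,0$ gives a uniform $N$-step minorization --- but the route to it must go through the nested preimage sets rather than through shadowing a single deterministic orbit.
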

 \begin{crl}
 \label{thm::global2}
When the group $G$ is compact, The convergence results of Theorem \ref{thm::global} hold globally, i.e. without the confinement assumption 1).
\end{crl}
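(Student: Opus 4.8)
The plan is to observe that the corollary is an immediate specialization of Theorem~\ref{thm::global}, obtained by taking the compact set $C$ to be the whole group $G$. First I would check that $G$ is an admissible choice for the set $C$ appearing in the hypotheses of Theorem~\ref{thm::global}. The only structural requirement imposed on $C$ there is $C = cl(C^o)$; since $G$ is a Lie group, it is a smooth manifold without boundary, hence both open and closed in itself, so that $G^o = G$ and $cl(G) = G$, and therefore $C = cl(C^o)$ holds trivially. Compactness of $G$ is exactly what makes $C = G$ a legitimate compact set.

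With this choice, the confinement Assumption 1) becomes vacuous: by construction the error variable $\eta_n = \chi_n \hat{\chi}_n^{-1}$ takes its values in $G$, so $\eta_n \in C = G$ almost surely for every $n$ and every $\eta_0 \in G$, with no further argument required. Assumptions 2) and 3) bear only on the process and output noises and are left untouched, so they transfer verbatim. Likewise, the almost-global convergence hypothesis of Theorem~\ref{thm::global} — that the noiseless recursion \eqref{noiseless} drives $\gamma_0$ to $I_d$ for every $\gamma_0 \in C$ outside a Haar-null set — is now precisely the statement of almost-global convergence over all of $G$, which is the hypothesis we retain.

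It then remains only to invoke Theorem~\ref{thm::global} with $C = G$. Its conclusion furnishes a unique stationary distribution $\pi$ such that $\norm{\mu_n - \pi}_{T.V.} \to 0$ for every prior $\mu_0$ supported by $C$. But when $C = G$, every probability law on $G$ is supported by $C$, so the convergence holds for an arbitrary initial distribution of $\eta_0$; this is exactly the global statement claimed, and Assumption 1) has been dropped rather than assumed.

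I expect no genuine obstacle, as the corollary is a direct application of the theorem. The only points requiring a line of care are the verification that compactness of the Lie group does not introduce a boundary (so that $G = cl(G^o)$ indeed holds, which follows from $G$ being a manifold), and the remark that the Haar measure used in Assumption 2) and in the almost-global convergence hypothesis is finite on the compact group, so that the notion of a Haar-null exceptional set is unambiguous and all the measure-theoretic statements of Theorem~\ref{thm::global} carry over without modification.
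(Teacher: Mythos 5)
Your proposal is correct and matches the paper's own argument, which simply invokes Theorem \ref{thm::global} with $C=G$; the additional checks you record (that $G=cl(G^o)$ and that confinement becomes vacuous) are exactly the details the paper leaves implicit.
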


\begin{thm}
 \label{thm::subgroup}
Under the assumptions of Theorem \ref{thm::global}, assuming only  $h(\gamma_n,0) \rightarrow 0$ instead of almost global convergence of $(\gamma_n)_{n\geq 0}$, that $G$ is compact, the set $K=\{g \in G, h(g,0)=h(I_d,0) \}$ connected and $h(\Upsilon,0)=h(I_d,0)$, the results of Theorem \ref{thm::global} are still valid. Moreover, if $W_n$ is isotropic, we have $\pi(\tilde{\Upsilon} V)=\pi(V)$ for any $\tilde{\Upsilon} \in K$ commuting with $\Upsilon$ ($\tilde{\Upsilon} \Upsilon = \Upsilon \tilde{\Upsilon}$).
\end{thm}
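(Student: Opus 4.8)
My plan is to treat the subgroup $K$ as the natural generalization of the equilibrium point $I_d$ of Theorem \ref{thm::global}, and to recover the Harris-chain argument with $K$ playing the role of $I_d$. First I would record the structure of $K$. Writing $y_0:=h(I_d,0)$ and using the equivariance $h(g,0)=g^{-1}h(I_d,0)=g^{-1}y_0$, one sees that $K=\{g:\ g^{-1}y_0=y_0\}$ is exactly the stabiliser of $y_0$, hence a closed — thus compact, since $G$ is compact — subgroup, which is connected by hypothesis and contains $\Upsilon$ since $h(\Upsilon,0)=y_0$. Because $\Upsilon\in K$ one has $\Upsilon K\Upsilon^{-1}=K$, so $K$ is invariant under the noiseless prediction $\gamma\mapsto\Upsilon\gamma\Upsilon^{-1}$; and on $K$ the correction $K(h(\gamma',0))^{-1}=K(h(I_d,0))^{-1}=I_d$ acts trivially, so the noiseless flow \eqref{noiseless} leaves $K$ invariant and reduces there to conjugation by $\Upsilon$. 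Finally, a compactness argument shows that the weakened hypothesis (that the noiseless output error $h(\gamma_n,0)$ tends to its nominal value $h(I_d,0)$) is equivalent to $d(\gamma_n,K)\to0$: the deterministic observer now drives the state error to the whole set $K$ rather than to $I_d$.

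The core of the proof is then to re-establish the minorisation (small-set) condition underlying Theorem \ref{thm::global}, with a tube around $K$ in place of a ball around $I_d$. As in the proof of Theorem \ref{thm::global}, Assumption 2) together with Assumption 3) gives that from any $g$ a single step of the chain puts positive, uniformly bounded-below Haar density on a ball $\mathcal{B}_o(\cdot,\alpha/2)$ around the deterministic image of $g$; and the a.e. contraction $d(\gamma_n,K)\to0$ ensures that, with positive probability (the event that the successive noises stay small, so the chain shadows the noiseless trajectory), the chain reaches a fixed closed neighbourhood $N$ of $K$ from almost every starting point in a bounded number of steps. The step that has no counterpart in Theorem \ref{thm::global}, and which I expect to be the main obstacle, is to upgrade this to a uniform minorisation over the whole of $N$: since $K$ may be positive-dimensional one must show that the chain can travel all the way around $K$. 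Here connectedness of $K$ is indispensable — cover the compact connected $K$ by finitely many overlapping balls of radius $\alpha/2$; a noise step of size at most $\alpha$ hops between consecutive balls with positive probability, so from a neighbourhood of any point of $K$ the chain reaches the fixed ball $\mathcal{B}_o(I_d,\alpha/2)$ with positive probability in a uniformly bounded number $m$ of steps, yielding $P^m(x,\cdot)\ge\epsilon\,\nu$ for all $x\in N$, with $\nu$ a fixed multiple of Haar measure on that ball. Were $K$ disconnected, bounded noise could fail to bridge its components and several stationary laws could coexist, which is precisely what connectedness rules out. Aperiodicity follows from diffusivity since the minorising ball contains a neighbourhood of $I_d$. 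With an accessible aperiodic small set on the compact group $G$, the standard Harris/Doeblin ergodic theorem gives a unique stationary distribution $\pi$ and $\norm{\mu_n-\pi}_{T.V.}\to0$, exactly as in Theorem \ref{thm::global} and its Corollary \ref{thm::global2} (here $C=G$).

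For the last assertion I would argue by uniqueness of $\pi$: it suffices to prove that the left translation $L_{\tilde\Upsilon}:g\mapsto\tilde\Upsilon g$ commutes with the transition kernel, since then the push-forward $(L_{\tilde\Upsilon})_{*}\pi$ is again stationary and must equal $\pi$, i.e. $\pi(\tilde\Upsilon V)=\pi(V)$. Coupling the two chains started at $g$ and $\tilde\Upsilon g$ with the same noises, the prediction step is equivariant in law: using $\tilde\Upsilon\Upsilon=\Upsilon\tilde\Upsilon$ and writing $W_n\tilde\Upsilon=\tilde\Upsilon(\tilde\Upsilon^{-1}W_n\tilde\Upsilon)$, isotropy of $W_n$ gives $\tilde\Upsilon^{-1}W_n\tilde\Upsilon\overset{\mathcal{L}}{=}W_n$, whence $\Upsilon W_n(\tilde\Upsilon g)\Upsilon^{-1}\overset{\mathcal{L}}{=}\tilde\Upsilon\,\Upsilon W_n g\Upsilon^{-1}$, so the predicted error from $\tilde\Upsilon g$ equals in law $\tilde\Upsilon$ times the one from $g$. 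It then remains to check that the correction step is equivariant, i.e. $h(\tilde\Upsilon\eta',V_n)\overset{\mathcal{L}}{=}h(\eta',V_n)$ for $\tilde\Upsilon\in K$; this is the only delicate point of the symmetry argument, and it uses that $\tilde\Upsilon$ stabilises $y_0$ together with invariance in law of the output noise under $\tilde\Upsilon$ (which holds for the isotropic measurement noises of the examples: with $h(\chi,V)=\chi^{-1}(b+V)$ one gets $h(\tilde\Upsilon\eta',V)=\eta'^{-1}(b+\tilde\Upsilon^{-1}V)\overset{\mathcal{L}}{=}\eta'^{-1}(b+V)=h(\eta',V)$). Granting this, the full kernel commutes with $L_{\tilde\Upsilon}$ and the claimed invariance of $\pi$ follows.
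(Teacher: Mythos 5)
Your proposal is correct and follows essentially the same route as the paper: for the first claim you re-run the Harris-chain argument of Theorem \ref{thm::global}, using compactness and connectedness of the stabiliser $K$ (bridged by noise steps of size $\alpha$) to carry the chain from the set the noiseless dynamics converges to all the way into a fixed ball around $I_d$, and for the second claim you establish equivariance in law of the transition kernel under left translation by $\tilde\Upsilon$ (via $\tilde\Upsilon\Upsilon=\Upsilon\tilde\Upsilon$, isotropy of $W_n$, and $\tilde\Upsilon$ stabilising $h(I_d,0)$) and invoke uniqueness of $\pi$. The only difference is packaging: the paper funnels the first part through the sets $U_n$ of Lemma \ref{MainProp}, proving $K\subset U_{N_1}$ with the conjugated sets $Q_n=\Upsilon^{-n}U_n\Upsilon^{n}\cap K$ and an open--closed argument in the connected $K$, whereas you phrase the same idea as a finite chain of overlapping $\alpha/2$-balls giving an $m$-step minorisation over a tube around $K$.
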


The proofs of the results above have all been moved to the Appendix.

\subsection{Application to attitude estimation}
\label{asymptotic2vect}

Consider the attitude estimation example of Subsection \ref{expl::2vect}. In a deterministic and continuous time setting, almost globally converging observers have been proposed in several papers (see e.g. \cite {salcudean1991globally,mahony-et-al-IEEE}) and have since been analyzed and extended in a number of papers. In order to apply the previously developed theory to this example, the challenge is twofold. First, the deterministic observer must be adapted to the discrete time and proved to be almost globally convergent. Then, the corresponding filter must be proved to satisfy the assumptions of the theorems above in the presence of noise.
In discrete time, the system equations read:
\begin{equation*}
\begin{aligned}
R_{n+1}  = & W_n R_n \Omega_n, \\
Y_n  = & (R_{n}^T b_1+V_n^1, R_{n}^T b_2 + V_n^2)
\end{aligned}
\end{equation*}
with the notations introduced in \ref{expl::2vect}. We propose the following filter on $SO(3)$:
\begin{equation}\label{SOOO:eq}
\begin{aligned}
\hat{R}'_{n+1} & = \hat{R}_n \Omega_n, \\
\hat{R}_{n+1} & = K(\hat{R'}_{n+1} Y_{n+1}) \hat{R'}_{n+1},\\\text{with }
K(y_1,y_2)&=\exp( k_1 (y_1\times b_1) + k_2 (y_2 \times b_2)), \\
 & k_1>0,~  k_2>0,~ k_1+k_2 \leqslant 1\end{aligned}\end{equation}
\begin{prop}
\label{prop::K1}
With noise turned off, the discrete invariant observer \eqref{SOOO:eq} is almost globally convergent, that is, the error converges to $I_d$ for any initial condition except one.
\end{prop}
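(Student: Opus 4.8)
The plan is to specialize the noiseless error recursion \eqref{noiseless} to this example and then exhibit an explicit Lyapunov (Wahba-type) cost. Since the filter \eqref{SOOO:eq} carries no left input ($\Upsilon=I_d$), \eqref{noiseless} reduces to $\gamma'_{n+1}=\gamma_n$ followed by the correction, and using $h(\gamma,0)=(\gamma^{-1}b_1,\gamma^{-1}b_2)$ the error obeys
\[
\gamma_{n+1}=\gamma_n\exp\bigl(-(u_n)_\times\bigr),\qquad u_n=k_1\bigl((\gamma_n^{-1}b_1)\times b_1\bigr)+k_2\bigl((\gamma_n^{-1}b_2)\times b_2\bigr).
\]
I would first normalize $\|b_1\|=\|b_2\|=1$ without loss of generality and record the crucial \emph{a priori} bound $\|u_n\|\le k_1\|b_1\|^2+k_2\|b_2\|^2=k_1+k_2\le 1<\pi$, which keeps the correction angle in a controlled range; this is where the gain constraint first enters.

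Next I would introduce the cost $V(\gamma)=\sum_{i=1,2}k_i\bigl(1-b_i^{T}\gamma^{-1}b_i\bigr)=\mathrm{const}-\mathrm{tr}(\gamma^{-1}M)$ with $M=k_1b_1b_1^{T}+k_2b_2b_2^{T}$. Since $\gamma^{-1}$ is a rotation, $V\ge 0$, and $V(\gamma)=0$ iff $\gamma^{-1}b_i=b_i$ for both $i$; as $b_1\times b_2\ne 0$ the vectors are independent, so this forces $\gamma=I_d$. The critical points of $V$ on $SO(3)$ form the classical Wahba set $U\,\mathrm{diag}(s_1,s_2,s_3)\,U^{T}$ with $M=UDU^{T}$ and $\prod_i s_i=1$: the minimum $I_d$, two saddles, and the maximum, which is the rotation by $\pi$ about $b_1\times b_2$.

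The heart of the proof is to show $V$ is a strict Lyapunov function for the discrete map. Writing $c_i=\gamma_n^{-1}b_i$ and $\theta=\|u_n\|$, Rodrigues' formula gives
\[
\Delta V_n:=V(\gamma_{n+1})-V(\gamma_n)=\sum_i k_i\,b_i^{T}\bigl(I-\exp((u_n)_\times)\bigr)c_i ,
\]
whose first-order term equals $-\theta\sin\theta\le 0$ (using $\sum_i k_i(c_i\times b_i)=u_n$). The task is to bound the remaining curvature terms, proportional to $1-\cos\theta$, and to verify that $\theta\le k_1+k_2\le 1$ forces $\Delta V_n\le 0$ with equality iff $u_n=0$. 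I expect this finite-step estimate to be the main obstacle: unlike a continuous gradient flow, the correction is a genuine rotation, so one must dominate second-order terms rather than read off the sign from $-\|\nabla V\|^2$. Once this holds, a discrete LaSalle argument applies: $V_n$ is non-increasing and bounded below, hence converges, so $\Delta V_n\to 0$ and $u_n\to 0$; by compactness of $SO(3)$ the $\omega$-limit set is a nonempty invariant subset of the equilibrium set $\{u=0\}$, so $\gamma_n$ converges to a single critical point.

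Finally I would rule out convergence to the non-identity critical points for almost every initial condition. Linearizing the map at $I_d$ in the right-trivialized coordinate $\gamma=\exp((\phi)_\times)$ gives $\phi_{n+1}=(I-P)\phi_n$ with $P=(k_1+k_2)I-M\succ 0$ whose eigenvalues lie in $(0,k_1+k_2]\subset(0,1]$; since $k_1+k_2\le 1<2$, all eigenvalues of $I-P$ lie in $[0,1)$, so $I_d$ is asymptotically stable. The remaining critical points are unstable: the strict global maximum of $V$, namely the antipodal rotation about $b_1\times b_2$, is a source (a direct computation on the one-parameter family of rotations about that axis yields a multiplier $1+k_1+k_2>1$), so its basin reduces to the point itself, while the two saddles have Haar-null stable manifolds. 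Hence $\gamma_n\to I_d$ for every initial condition outside this negligible set, in particular away from the exceptional antipodal configuration, which is exactly the almost-global convergence hypothesis needed to invoke Theorem \ref{thm::global}.
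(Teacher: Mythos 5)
Your plan identifies the right Lyapunov function (your $V$ is, after normalizing $\|b_i\|=1$, just one half of the paper's $E(\gamma)=k_1\|\gamma^Tb_1-b_1\|^2+k_2\|\gamma^Tb_2-b_2\|^2$) and the right smallness ingredient ($\|u_n\|\le k_1+k_2\le 1$). But the proof has a genuine gap at exactly the point you yourself flag as ``the main obstacle'': you never establish the monotonicity estimate $\Delta V_n\le 0$ with equality iff $u_n=0$. Everything downstream (the LaSalle argument, the instability of the non-identity critical points) is conditional on that inequality, so as written this is a proof outline, not a proof. The second-order Rodrigues bookkeeping you propose is not routine: the first-order term is $-\theta\sin\theta$ only after using $\sum_i k_i(c_i\times b_i)=u_n$, and the $(1-\cos\theta)$ terms involve $\sum_i k_i\,b_i^T(u_n)_\times^2 c_i$, whose sign is not controlled by $\theta\le 1$ alone without a further argument. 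The paper closes precisely this gap by a different device (Proposition \ref{thm::discrete}): it interpolates the discrete step by the flow $\dot{\tilde\gamma}_t=-\tilde\gamma_t k(\tilde\gamma_n)$ on $[n,n+1]$ and uses the operator-norm bound $|\partial k/\partial x|\le 1$ (the differential analogue of your $k_1+k_2\le 1$) to show $\langle k(\tilde\gamma_t),k(\tilde\gamma_n)\rangle\ge 0$ throughout the step, hence $E$ decreases along the whole interpolated path rather than merely to first order at its left endpoint. If you want to keep your direct route, you must actually prove the finite-step inequality; otherwise you should verify the two hypotheses $\gamma\cdot k(\gamma)=grad_E(\gamma)$ and $|\partial k/\partial\gamma|\le k_1+k_2\le 1$ and invoke the general lemma, which is what the paper does.

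A secondary, smaller issue: your final step asserts that the two saddles have Haar-null stable manifolds and that the maximum is a source, but gives no argument for the former (a center-stable manifold theorem for the discrete map would be needed), and concluding convergence to a \emph{single} critical point from $u_n\to 0$ requires noting that the critical set of $V$ is finite here. The paper instead disposes of the exceptional set by a sublevel-set argument built into Proposition \ref{thm::discrete}. Neither of these is fatal, but they are additional loose ends on top of the missing monotonicity estimate.
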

The proof has been moved to the Appendix, only the main idea is given here. For the continuous-time deterministic problem it is known, and easily seen, that $E: \gamma \rightarrow k_1 ||\gamma^T b_1-b_1||^2 + k_2 ||\gamma^T b_2 - b_2||^2$ is a Lyapunov function, allowing to prove almost global convergence of the corresponding observer. In the discrete time deterministic case, the function above remains a Lyapunov function for the sequence $(\gamma_n)_{n\geq 0}$, allowing to derive Proposition \ref{prop::K1}. This is not trivial to prove, and stems from the more general following novel result:
\begin{prop}
\label{thm::discrete}
Consider a Lie group $G$ equipped with a left-invariant metric  $\langle .,. \rangle$, and a left-invariant deterministic discrete equation on $G$  of the form:
\begin{align*}
\gamma_{n+1}=\gamma_n \exp(-k(\gamma_n))
\end{align*}
Assume there exists a  $C^2$  function $E: G \rightarrow \RR_{\geq 0}$  with bounded sublevel sets, a global minimum at $I_d$, and a continuous and strictly positive function $u: G \rightarrow \RR_{>0}$ such that: $\forall x \in G, k(x) = u(x) [x^{-1}.grad_E(x)] $. If the condition $\forall x \in G, |\frac{\partial k}{\partial x}| \leqslant 1$ (for the operator norm) is verified, for any initial value $\gamma_0$ such that $I_d$ is the only critical point of $E$ in the sublevel set $\{x\in G\mid E(x)\leq E(\gamma_0)\}$ we have:
\[
\gamma_n \underset{n \rightarrow \infty}{\rightarrow} I_d
\]
\end{prop}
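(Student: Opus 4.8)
The plan is to show that $E$ is a strict discrete Lyapunov function for the recursion and then run a LaSalle-type accumulation argument. Write $\xi := k(\gamma_n) \in \mathfrak{g}$ and interpolate the single step by the curve $c(t) = \gamma_n \exp(-t\xi)$, $t \in [0,1]$, so that $c(0) = \gamma_n$ and $c(1) = \gamma_{n+1}$. Since $G$ is a matrix group, $\dot c(t) = c(t)(-\xi)$, i.e. the body velocity is the constant $-\xi$; by left-invariance of the metric, $\norm{\dot c(t)} = \norm{\xi}$ for all $t$. Differentiating $E$ along the curve, using left-invariance together with the defining relation $k(x) = u(x)\,[x^{-1}.\mathrm{grad}_E(x)]$, I would obtain
\[
\frac{d}{dt}E(c(t)) = -\scal{c(t)^{-1}.\mathrm{grad}_E(c(t)),\ \xi} = -\frac{1}{u(c(t))}\,\scal{k(c(t)),\ \xi},
\]
so that $E(\gamma_n)-E(\gamma_{n+1}) = \int_0^1 \frac{1}{u(c(t))}\scal{k(c(t)),\xi}\,dt$. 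The whole point is thus to show this integrand stays positive.

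The key step, and the one I expect to be the main obstacle, is to prevent the unit-length step from overshooting the minimum, which is precisely what the hypothesis $\abs{\partial k/\partial x}\le 1$ is designed to control. Setting $g(t) := k(c(t))$, the chain rule gives $g'(t) = dk_{c(t)}(\dot c(t))$, and the operator-norm bound together with $\norm{\dot c(t)} = \norm{\xi}$ yields $\norm{g'(t)} \le \norm{\xi}$. Hence $\norm{g(t)-\xi} \le t\,\norm{\xi}$ for $t \in [0,1]$, and consequently $\scal{k(c(t)),\xi} = \norm{\xi}^2 + \scal{g(t)-\xi,\xi} \ge (1-t)\norm{\xi}^2 \ge 0$, with strict positivity on $[0,1)$. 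Since $u>0$, the integrand is positive on $[0,1)$, so $E(\gamma_{n+1}) < E(\gamma_n)$ whenever $\xi \ne 0$, that is, whenever $\gamma_n$ is not a critical point of $E$ (a critical point gives $k=0$, hence a fixed point of the recursion and equality). The delicate point here is purely the interplay between the fixed step length $1$ and the Lipschitz-type constraint on $k$: it guarantees the gradient cannot reverse direction within a single step, so a full step remains a descent step.

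It then remains to turn monotone strict descent into convergence. Because left-invariant metrics on a Lie group are geodesically complete, Hopf--Rinow makes the closed bounded sublevel set $S := \{x : E(x)\le E(\gamma_0)\}$ compact, and the descent property keeps the entire orbit inside $S$. The sequence $E(\gamma_n)$ is non-increasing and bounded below by $E(I_d)\ge 0$, hence converges to some $E_\infty$. Let $\gamma_*$ be any accumulation point, $\gamma_{n_j}\to\gamma_*$; the update map $F(x)=x\exp(-k(x))$ is continuous, so $\gamma_{n_j+1}=F(\gamma_{n_j})\to F(\gamma_*)$ and, by continuity of $E$, $E(F(\gamma_*))=E_\infty=E(\gamma_*)$. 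The strict descent property forces $\gamma_*$ to be a critical point of $E$; since $I_d$ is by assumption the only critical point of $E$ in $S$, every accumulation point equals $I_d$. A sequence in a compact set whose unique accumulation point is $I_d$ converges to it, giving $\gamma_n\to I_d$ and completing the argument.
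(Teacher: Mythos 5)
Your proof is correct and follows essentially the same route as the paper's: the same geodesic interpolation of a single step, the same use of the operator-norm bound on $\partial k/\partial x$ to show the inner product $\langle k(c(t)),k(\gamma_n)\rangle$ cannot change sign over a unit step (hence no overshoot and monotone descent of $E$), and the same LaSalle-type accumulation-point argument combined with the uniqueness of the critical point $I_d$ in the sublevel set. Your write-up is, if anything, slightly more explicit than the paper's on the compactness of the sublevel set and on why accumulation points must be critical.
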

The proof has been moved to the Appendix.
Note that, the latter  property is closely related to Lemma 2 of \cite{tron2012riemannian}. 
Using Theorem \ref{thm::global} and Proposition \ref{prop::K1} we finally directly get:
\begin{thm}
The distribution of the error variable of the invariant filter \eqref{SOOO:eq} converges for the T.V. norm to an asymptotic distribution, which does not depend on its initial distribution.
\end{thm}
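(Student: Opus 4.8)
The plan is to verify that the filter \eqref{SOOO:eq} meets the hypotheses of Theorem \ref{thm::global}, so that the conclusion follows at once. Since $G=SO(3)$ is compact, Corollary \ref{thm::global2} frees us from checking the confinement Assumption 1); moreover the system \eqref{SOOO:eq} has no left input, i.e. $\Upsilon_n\equiv I_d$, so the constant-left-input requirement of Theorem \ref{thm::global} holds trivially and $(\eta_n)$ is a homogeneous Markov chain. It therefore remains only to check Assumptions 2) and 3), and to supply the almost global convergence of the noise-free error, which is exactly Proposition \ref{prop::K1}.

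For Assumption 2) (diffusivity), I would observe that $W_n$ is obtained by integrating the isotropic Gaussian white gyroscope noise $\tilde w_t^\omega$ over $[t_n,t_{n+1}]$, which realizes $W_n$ as the time-$\Delta t$ value of a (bi-invariant, by isotropy) Brownian motion on $SO(3)$. Its law therefore admits a density with respect to the Haar measure given by the heat kernel on $SO(3)$, which is smooth and strictly positive on the whole compact group; hence it is bounded below by a positive constant on any fixed ball $\mathcal{B}_o(I_d,\alpha)$, which is precisely the strengthened form of Assumption 2).

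For Assumption 3) (reasonable output noise), the key point is that the measurement noises $V_n^1,V_n^2$ are Gaussian and hence charge every neighbourhood of the origin. By continuity of $h$ and of the gain map $K(y_1,y_2)=\exp(k_1(y_1\times b_1)+k_2(y_2\times b_2))$, for each $g$ the map $V\mapsto gK(h(g,V))^{-1}$ is continuous and sends $V=0$ to $gK(h(g,0))^{-1}$; thus the event that this quantity lands in $\mathcal{B}_o(gK(h(g,0))^{-1},\tfrac{\alpha}{2})$ contains an event of the form $\{\|V\|<\delta\}$, whose probability is bounded below uniformly in $g$ because $SO(3)$ is compact and the map depends continuously, hence uniformly continuously, on $(g,V)$. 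This yields the uniform lower bound $\epsilon'>0$ required by Assumption 3).

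With all three assumptions in force and the almost global convergence of $(\gamma_n)$ to $I_d$ guaranteed by Proposition \ref{prop::K1}, Theorem \ref{thm::global} applies verbatim and produces a unique stationary law $\pi$ to which the error distribution converges in total variation from any initial distribution. The main obstacle is the verification of the two noise assumptions, and specifically obtaining the \emph{uniform} lower bounds, over all $g\in SO(3)$ and over $n$: the positivity of the heat kernel is classical, but pinning down a single ball radius $\alpha$ on which both the diffusivity bound and the output-noise bound hold simultaneously, together with the uniformity in $g$ of Assumption 3), is where the real work lies. Compactness of $SO(3)$ is the lever that makes all these uniformities available.
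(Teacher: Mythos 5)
Your proposal is correct and takes essentially the same route as the paper, whose proof of this theorem is simply the direct combination of Theorem \ref{thm::global} (with the confinement assumption automatic since $SO(3)$ is compact) and the almost global convergence of the noiseless error established in Proposition \ref{prop::K1}. Your explicit verification of the diffusivity and output-noise assumptions, via positivity of the heat kernel on $SO(3)$ and uniform continuity of $(g,V)\mapsto gK(h(g,V))^{-1}$ on a compact set, is actually more detailed than the paper, which leaves those checks implicit.
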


\subsection{Learning robust gains: application to the design of an artificial horizon}
\label{Artificial_horizon}

As proved in \ref{cv_results}, under appropriate conditions the error variable is a converging Markov chain whose asymptotic law depends on the gain function but not on the trajectory followed by the system (which is a major difference with most nonlinear filters, such as the EKF). Hence a fixed gain can be asymptotically optimized off-line, leading to a very low numerical cost of the on-line update.

A classical aerospace problem is the design of an artificial horizon using an inertial measurement unit (IMU). An estimation of the vertical is maintained using the observations of the accelerometer (which senses the body acceleration minus the gravity vector, expressed in the body frame) and the stationary flight assumption according to which the body's linear velocity is constant. The problem is that this approximation is not valid in dynamical phases (take-off, landing, atmospheric turbulence), which are precisely when the  artificial horizon is most needed. The problem is generally stated as follows:
\begin{equation*}
\begin{aligned}
\dotex{R_t} & =R_t (\omega_t + w_t)_{\times}, \\
Y_n & =R_{t_n}^Tg +V_n+N_n,
\end{aligned}
\end{equation*}
where $R_t$ is the attitude of the aircraft (the rotation from body-frame coordinates to inertial coordinates), $\omega_t$ is the continuous-time gyroscope increment and $Y_n$ is the observation of the accelerometer. The sensor noises $w_t$ and $V_n$ can be considered as Gaussian, and $N_n$ represents fluctuations due to accelerations of the aircraft that we propose to model as follows: $N_n$ is null with high probability but when non-zero it can take large values. The $N_n$'s are assumed to be independent as usually. 

\subsubsection{Convergence results}Consider the following class of filters:
\begin{equation}\label{horizon}
\begin{aligned}
\hat{R}'_{n+1} & = \hat{R}_n \Omega_n, \\
\hat{R}_{n+1}  & = K(\hat{R'}_{n+1} Y_{n+1}) \hat{R'}_{n+1}, \\
\text{with } K(y)  & = \exp(f_{k,\lambda}(y))
\end{aligned}
\end{equation}
where $f_{k,\lambda}(x) = k.\min(\text{angle}(x,g), \lambda)  \frac{x \times g}{||x \times g||}$ if $x \times g \not= 0$ and $f_{k,\lambda}(x) = 0$ otherwise.

The rationale for the gain tuning is as follows: if the accelerometer measures a value $y$, we consider the smallest rotation giving to $g$ the same direction as $y$. Conserving the same axis, the angle of this rotation is thresholded (hence the parameter $\lambda$) to give less weight to outliers (without purely rejecting them, otherwise the filter couldn't converge when initialized too far). Then we choose as a gain function a rotation by a fraction of the obtained angle. We begin with the following preliminary result:
\begin{lm}\label{lemm}
For any $0 < k \leqslant 1$ and $0<\lambda \leqslant \pi$ the output error $\norm{Y_n-\hat R_n^Tg}$ associated to the observer \eqref{horizon} with noise turned off converges to 0.
\end{lm}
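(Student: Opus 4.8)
The plan is to collapse the whole statement to a scalar recursion on the angle between two vectors on a fixed sphere. With the noise turned off we have $W_n = I_d$ and $V_n = N_n = 0$, so $Y_n = R_n^T g$, and (since here $\Upsilon = I_d$) the error $\gamma_n := R_n \hat R_n^{-1}$ is left unchanged by the prediction, the correction giving $\gamma_{n+1} = \gamma_n\,K(h(\gamma_n,0))^{-1}$ as in \eqref{noiseless}. First I would compute the innovation fed to the gain: using $\hat R'_{n+1} = \hat R_n\Omega_n$, $R_{n+1} = R_n\Omega_n$ and $\Omega_n \in SO(3)$ one gets $\hat R'_{n+1} Y_{n+1} = \hat R_n R_n^T g = \gamma_n^{-1} g =: y_n$, i.e. the innovation is exactly $h(\gamma_n,0)$, the misaligned gravity vector. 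Substituting $K(y) = \exp\big((f_{k,\lambda}(y))_\times\big)$ and using the skew-symmetry of $(\cdot)_\times$ yields the autonomous recursion $\gamma_{n+1} = \gamma_n \exp\big(-(f_{k,\lambda}(y_n))_\times\big)$, hence for the observed vector $y_{n+1} = \gamma_{n+1}^{-1} g = \exp\big((f_{k,\lambda}(y_n))_\times\big)\,y_n$. Since $\|y_n\| = \|g\|$ for all $n$, the dynamics live on a fixed sphere, and because $\|Y_n - \hat R_n^T g\| = \|\gamma_n^{-1}g - g\| = \|y_n - g\| = 2\|g\|\sin(\theta_n/2)$ with $\theta_n := \mathrm{angle}(y_n,g)$, it suffices to drive $\theta_n$ to $0$.

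The core geometric step is to show that one step decreases this angle in a controlled way. By definition $f_{k,\lambda}(y_n) = k\min(\theta_n,\lambda)\,\frac{y_n \times g}{\|y_n \times g\|}$, so $\exp\big((f_{k,\lambda}(y_n))_\times\big)$ is the rotation of angle $\phi_n := k\min(\theta_n,\lambda)$ about the axis $\frac{y_n \times g}{\|y_n\times g\|}$, which is orthogonal to $\mathrm{span}(y_n,g)$. I would apply Rodrigues' formula together with the identity $(y_n\times g)\times y_n = \|y_n\|^2 g - (y_n\cdot g)\,y_n$ to certify that this rotation keeps $y_n$ in that plane and turns it \emph{toward} $g$; since $k\le 1$ gives $\phi_n \le \theta_n$ (no overshoot) and $\theta_n < \pi$ keeps $\sin\phi_n > 0$, planar geometry then yields the clean update $\theta_{n+1} = \theta_n - k\min(\theta_n,\lambda)$.

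The final step is elementary: $(\theta_n)$ is strictly decreasing while $\theta_n > 0$; as long as $\theta_n \ge \lambda$ it drops by the fixed amount $k\lambda > 0$, so it enters $[0,\lambda)$ in finitely many steps, after which $\theta_{n+1} = (1-k)\theta_n$ with $1-k \in [0,1)$, forcing $\theta_n \to 0$ geometrically and hence $\|Y_n - \hat R_n^T g\| = 2\|g\|\sin(\theta_n/2)\to 0$. I expect the main obstacle to be the sign/no-overshoot argument of the second paragraph — rigorously showing the prescribed rotation reduces $\theta_n$ and never passes $g$ — which is precisely where $k\le 1$ is used. The one caveat to record explicitly is the degenerate configuration $y_0 = -g$ (that is $\theta_0 = \pi$), where $y_0\times g = 0$ and $f_{k,\lambda}\equiv 0$, so the filter remains stuck at this unstable equilibrium; convergence therefore holds for every initial condition except this antipodal point, a set of zero Haar measure that is harmless for the later use of the lemma in Theorem \ref{thm::subgroup}.
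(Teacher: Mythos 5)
Your proof follows essentially the same route as the paper's: observe that the rotation axis $y_n\times g/\norm{y_n\times g}$ is orthogonal to the plane spanned by $g$ and $\gamma_0^{-1}g$, so the sequence $(\gamma_n^{-1}g)$ stays in that plane and the problem collapses to the scalar recursion $\theta_{n+1}=\theta_n-k\min(\theta_n,\lambda)$, which tends to $0$. The only difference is that you explicitly flag the antipodal configuration $y_0=-g$ (where $f_{k,\lambda}$ vanishes and the filter is stuck) as a measure-zero exception; the paper's proof silently ignores this degenerate case, so your caveat is a welcome refinement rather than a deviation.
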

\begin{proof}
Let us consider the error evolution when the noise is turned off. It writes
$
\gamma_{n+1}=\gamma_n \exp(-f_{k,\lambda}(\gamma_n^{-1}g))
$, 
thus we have $
\gamma_{n+1}^{-1}g = \exp(f_{k,\lambda}(\gamma_n^{-1}g)) \gamma_n^{-1}g$. 
As for any $n \in \NN$, $f_{k,\lambda}(\gamma_n^{-1}g)$ is orthogonal to $g$ and  $\gamma_n^{-1}g$, $\gamma_{n+1}^{-1}g$ stays in the plane spanned by $g$ and $\gamma_0^{-1}g$, as well as the whole sequence $(\gamma_n^{-1} g)_{n \geqslant 0}$. Let $\phi_n=\text{angle}(\gamma_n^{-1}g,g)$. The dynamics of $(\phi_n)$ writes: $\phi_{n+1} = \phi_n - k.\min(\lambda, \phi_n)$. Thus $\phi_n$ goes to $0$, i.e: $\gamma_n^{-1}g \underset{n \rightarrow \infty}{\rightarrow} g$, i.e. the observation error goes to 0.
\end{proof}
The following result is a mere consequence of Lemma \ref{lemm} and Theorem \ref{thm::subgroup}.
\begin{prop}
\label{prop::K2}
The error variable associated to the filter defined by \eqref{horizon} converges to a stationary distribution for the T.V. norm, which does not depend on its initial distribution.
\end{prop}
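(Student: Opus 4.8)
The plan is to recognize the filter \eqref{horizon} as an instance of the general fixed-gain scheme and to verify, one by one, the hypotheses of Theorem \ref{thm::subgroup}, with Lemma \ref{lemm} supplying precisely the weakened convergence requirement. First I would cast \eqref{horizon} in the canonical form: here $G=SO(3)$, the left input is trivial ($\Upsilon_n\equiv\Upsilon=I_d$, since the prediction reads $\hat R'_{n+1}=\hat R_n\Omega_n$), the output map is $h(R,V,N)=R^{-1}g+V+N$ with isotropic noise, so its noiseless value is $h(R,0)=R^{-1}g$ and $h(I_d,0)=g$; the constant gain $K(y)=\exp(f_{k,\lambda}(y))$ satisfies $K(h(I_d,0))=K(g)=I_d$ because $g\times g=0$ forces $f_{k,\lambda}(g)=0$. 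Since $SO(3)$ is compact, the confinement Assumption 1) holds automatically (Corollary \ref{thm::global2}), and the structural condition $h(\Upsilon,0)=h(I_d,0)$ is trivially met as $\Upsilon=I_d$.

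Next I would verify the two genuine noise hypotheses. For the diffusivity Assumption 2), the increment $W_n$ is built from the Gaussian Langevin noise $w_t$ via Theorem \ref{discrete_form}; pushing a nondegenerate Gaussian on $\mathfrak{so}(3)$ forward through the exponential map yields a law whose density against the Haar measure is smooth and strictly positive on a neighbourhood of $I_d$, hence bounded away from zero on some ball $\mathcal{B}_o(I_d,\alpha)$. For the output Assumption 3), the noise is $V_n+N_n$ with $V_n$ Gaussian and $N_n=0$ with positive probability; conditioning on $\{N_n=0\}$ and using that $V_n$ has full support in $\RR^3$, together with continuity of $V\mapsto gK(h(g,V))^{-1}$ at $V=0$ on the full-measure set where the argument is not aligned with $g$, yields a uniform positive lower bound $\epsilon'$ on the probability that the corrected point lands in the required half-ball.

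With the hypotheses of Theorem \ref{thm::global} in place save almost global convergence, I would invoke Lemma \ref{lemm} to replace that missing ingredient: it states that, with noise off, $\norm{Y_n-\hat R_n^Tg}\to0$, i.e. $h(\gamma_n,0)=\gamma_n^{-1}g\to g=h(I_d,0)$, which is exactly the weakened hypothesis $h(\gamma_n,0)\to h(I_d,0)$ required by Theorem \ref{thm::subgroup}. The remaining structural condition is connectedness of the level set $K=\{R\in SO(3):R^{-1}g=g\}$; but this is the stabilizer of the gravity direction, the group of rotations about the vertical axis, isomorphic to $SO(2)$ and hence connected. All hypotheses checked, Theorem \ref{thm::subgroup} furnishes a unique stationary law $\pi$ with $\norm{\mu_n-\pi}_{T.V.}\to0$ for every initial distribution, which is the claim.

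I expect the only delicate step to be Assumption 3). The map $f_{k,\lambda}$ is singular along the axis $y\parallel g$ and thresholded at $\lambda$, so $y\mapsto K(h(g,V))$ is merely piecewise continuous; the argument must confine attention to the full-measure set where the observation is not vertical and exploit the spike structure of $N_n$ (zero with positive probability) rather than seek a global smoothness estimate. The single-vector nature of the observation also means convergence to $I_d$ genuinely fails — only the output error vanishes — which is precisely why Theorem \ref{thm::subgroup}, and not Theorem \ref{thm::global}, is the right tool, and why $\pi$ inherits invariance under the $SO(2)$ of rotations about gravity.
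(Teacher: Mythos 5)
Your proposal is correct and follows exactly the paper's route: the paper proves Proposition \ref{prop::K2} by declaring it "a mere consequence of Lemma \ref{lemm} and Theorem \ref{thm::subgroup}," which is precisely the combination you invoke, with Lemma \ref{lemm} supplying the weakened hypothesis $h(\gamma_n,0)\to h(I_d,0)$ and the connected stabilizer $K\cong SO(2)$ playing the role you describe. Your additional verification of the diffusivity and output-noise assumptions, and your remark on why Theorem \ref{thm::global} would not suffice, only make explicit what the paper leaves implicit.
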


\subsubsection{Numerical asymptotic gain optimization}To each couple $(k,\lambda)$ we can associate an asymptotic error dispersion (computed in the Lie Algebra) associated to the corresponding stationary distribution, and try to minimize it. As all computations are to be done off-line, the statistics of all distributions can be computed using particle methods. Table \ref{tab::horizon} gives the parameters of the model used in the following numerical experiment. Figure \ref{fig::Cuve} displays the Root Mean Square Error $RMSE = \sqrt{\mathbb{E}(\eta_\infty g - g)}$, computed over a grid for the parameters $(k,\lambda)$. The minimum is obtained for $k=0.1202$ and $\lambda=0.0029$.
If we compare it to a MEKF, we observe a huge difference. For our asymptotic invariant filter we get $RMSE=8.02 \times 10^{-4}$. For the MEKF, the observation noise matrix giving the best results leads to the value $RMSE=4.3 \times 10^{-3}$. This result is not surprising due to the fact that the outliers significantly pollute the estimates of the Kalman filter (see Fig. \ref{fig::Cuve}). This illustrates the fact that when the noise is highly non-Gaussian, an asymptotic gain with some optimality properties can still be found. 
\begin{table*}[h]
\centering
\caption{Artificial horizon: experiment parameters}
\begin{tabular}{|c|c|}
\hline
Standard Deviation of the model noise & $0.01$ degree $= 1.75 \times 10^{-4}$ rad \\
Standard Deviation of the regular observations & $0.1$ degree $= 1.75 \times 10^{-3}$ rad \\
Standard Deviation of the outliers & 30 degrees \\
Probability of the outliers to occur & 0.01 \\
\hline
\end{tabular}
\label{tab::horizon}
\end{table*}

%

\begin{figure}[!t]
\centering
\includegraphics[width=3in]{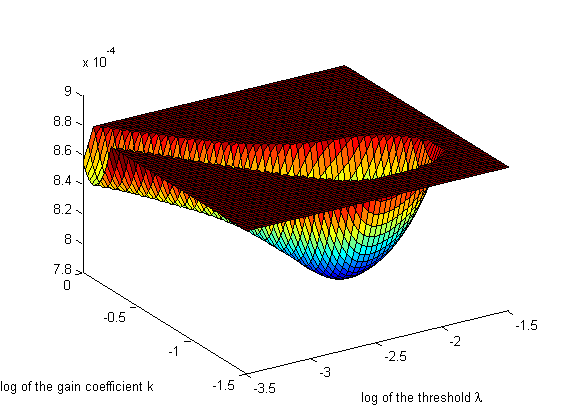}
\includegraphics[width=3in]{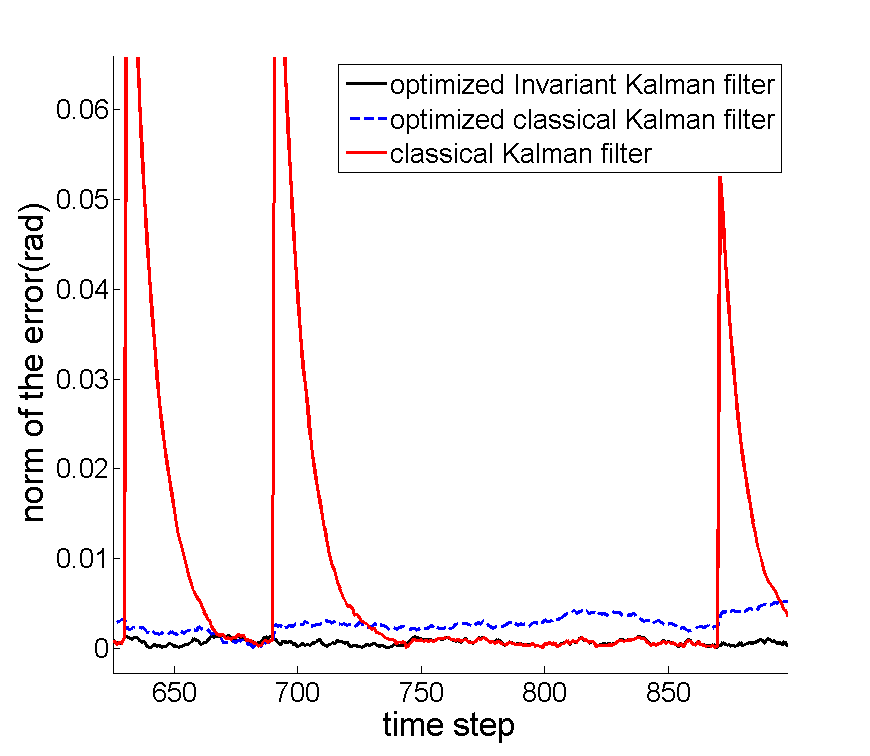} 
\caption{
Artificial horizon example. Top plot: off-line optimization of the parameters of the asymptotic invariant filter (axis $x \rightarrow \lambda$ ; axis $y \rightarrow k$ ; axis $z \rightarrow$ Root Mean Square Error). The highest values are not displayed to improve visualization. The performance is optimal for 
$k=0.1202, \lambda=0.0029$. Bottom plot: error evolution of three artificial horizons : the invariant filter with optimized gain functions (black), a MEKF ignoring outliers (red), and a MEKF where the observation noise covariance matrix has been numerically adjusted to minimize the RMSE (dashed line). We see that the MEKF cannot filter the outliers and at the same time be efficient in the absence of outliers, contrarily to the proposed invariant filter.
}
\label{fig::Cuve}
\end{figure}

\section{Gaussian filters}
\label{Gaussian_filters}
The present section focuses more on the transitory phase, and the gain is computed at each step based on Gaussian noise approximations and linearizations. Beyond the local optimization underlying the gain computation, the two following filters have the merit to readily convey an approximation of the extent of uncertainty carried by the estimations. We first introduce the discrete-time invariant extended Kalman filter (IEKF), which is an EKF, but based on the linearization of an invariant state error. Then we introduce the IEnKF, which computes the empirical covariance matrix of the error using particles. As all computations can be done off-line, it is easily implementable, as long as the gains can be stored, and is shown to convey a better indication of the error's dispersion for large noises. Simulations illustrate the results.

\subsection{The discrete-time Invariant EKF}\label{simus}

In the present section we introduce the discrete-time IEKF. Indeed, as in the standard EKF theory, the idea is to linearize the error equation \eqref{Markov1}-\eqref{Markov2} assuming the noises and the state error are small enough, use Kalman equations to tune the gains on this linear system, and implement the gain on the nonlinear model. As in the heuristic theory of the Invariant EKF in continuous time \cite{bonnabel2007left} \cite{bonnabel2009invariant} the gains and the error covariance matrix are proved to converge to a fixed value. In particular this allows (after some time) to advantageously replace the gain by its constant final value leading to a numerically very cheap asymptotic version of the IEKF.

\begin{algorithm}                      
\caption{Invariant Extended Kalman Filter}          
\label{algo::IEKF}                           
\begin{algorithmic}                    
\STATE{Returns at each time step $\hat{\chi}_n$, $P_n$ such that $\chi_n \approx \exp(\xi_n) \hat{\chi}_n$, where $\xi_n$ is Gaussian and $var (\xi_n)=P_n$. $\hat{\chi}_0$ and $P_0$ are inputs.}
\STATE{$H_\xi,H_V$ defined by:
\[ h(\exp({\xi'_n}),V_n)  = h(I_d,0) + H_\xi \xi'_n+ H_V V_n +O(\norm{\xi'_n}^2)+O(\norm{V_n}^2)\] }
\STATE{Set $n=0$.}
\LOOP
\STATE{Compute the value $M_{t_{n+1}}$ solving the equation \[M_{t_n}=0 , \qquad \frac{d}{dt}M_t=Var(w_t)+ad_{\upsilon_t} M_t ad_{\upsilon_t}^T\]}
\STATE{The process noise covariance $Q^w_n=Var(Ad_{\Upsilon_n}w_n)$ is equal to $M_{t_{n+1}}$ $\bigl($if $w_n$ is isotropic or if $\upsilon_t\equiv 0$, merely set $Q^w_n=Var(w_n)(t_{n+1}-t_{n})\bigr)$,}
\STATE{$Q^V_{n+1}=Var(V_{n+1})$,}
\STATE{$P_{n+1|n} = Ad_{\Upsilon_n}P_{n}Ad_{\Upsilon_n}^T + Q^w_n$,}
\STATE{$ S_{n+1} = H_V Q^V_{n+1} H_V^T + H_\xi P_{n+1|n} H_\xi^T$,}
\STATE{ $ L_{n+1} = P_{n+1|n} H_\xi^T S_{n+1}^{-1} $,}
\STATE{$P_{n+1} = (I-L_{n+1} H_\xi)P_{n+1|n} $,}
\STATE{ $\hat{\chi}'_{n+1} = \Upsilon_n \hat{\chi}_n \Omega_n$,}
\STATE{ $\hat{\chi}_{n+1} = \exp(L_{n+1} [ \hat{\chi'}_{n+1} Y_{n+1}-h(I_d,0) ] ) \hat{\chi'}_{n+1}$ }
\STATE{$n \leftarrow n+1$}

\ENDLOOP

\end{algorithmic}
\end{algorithm}

\subsubsection{Linearization of the equations and IEKF formulas}



\label{linearization}
We consider here the error equations \eqref{Markov1}-\eqref{Markov2}. Assuming errors and noises are small, we introduce their projection in the tangent space at the identity $I_d$  (the so-called Lie algebra $\mathfrak{g}$ of the group) using the matrix exponential map $\exp(.)$ (see e.g. \cite{olver-book99}). As the matrix vector space $\mathfrak{g}$ can be identified with $\RR^{\dim \mathfrak{g}}$ using the linear mapping $(.)_m$ (see Table  \ref{tab::formulaire}) we can assume $\exp: \RR^{\dim \mathfrak{g}} \rightarrow G$. This function is defined in Table \ref{tab::formulaire} for usual Lie groups of mechanics. Its inverse function will be denoted by $\exp^{-1}$. 
We thus define the following quantities of $\RR^{\dim \mathfrak{g}}$:
\begin{table*}[!t]
\renewcommand{\arraystretch}{1.3}
\caption{Useful formulas for some classical matrix Lie groups}
\label{tab::formulaire}
\centering
\begin{tabular}{|c||c||c|}
\hline
$G$ & $SO(3)$ ($\RR^{\dim \frak{g}}=\RR^3$) & ($\RR^{\dim \frak{g}}=\RR^6$) \\
\hline
\hline
Embedding of $G$ & $R \in \mathcal{M}_3(\RR), R^TR=I_3 $ & $\left( \begin{array}{cc}
R & v \\
0 & 1 \end{array} \right), R \in SO(3),u \in \RR^3 $ \\
Embedding of $\mathfrak{g}$ & $A_3: \psi \in \mathcal{M}_3(\RR), \psi^T=-\psi$ & $\left( \begin{array}{cc}
\psi & u \\
0 & 0 \end{array} \right), \psi \in A_3,u \in \RR^3 $ \\
$(.)_{m}$ & $\xi \rightarrow (\xi)_\times $ & $\left( \begin{array}{c}
\xi \\
 u \end{array} \right) \rightarrow \left( \begin{array}{cc}
(\xi)_\times & u \\
0 & 0 \end{array} \right)$\\
$x \rightarrow Ad_x$ & $R \rightarrow R$ & $\left( \begin{array}{c}
R \\
 T \end{array} \right) \rightarrow \left( \begin{array}{cc}
R & 0 \\
(T)_\times R & R \end{array} \right)$ \\

$\xi \rightarrow ad_\xi$ &  $\xi \rightarrow (\xi)_\times$ &  $\left( \begin{array}{c}
\xi \\
u \end{array} \right) \rightarrow \left( \begin{array}{cc}
(\xi)_\times & 0 \\
(u)_\times & (\xi)_\times \end{array} \right)$ \\
$\exp$ & $\exp(x)=I_3+\frac{sin(||x||)}{||x||}(x)_\times$ & $
\exp(x)=I_4+(x)_m+ \frac{[1-cos(||x||)]}{||x||^2}(x)_\times^2$ \\
 & $+ \frac{1}{||x||^2}(1-cos|x|)(x)_m^2$ & $+\frac{1}{||x||^3}(||x||-sin||x||)(x)_m^3$ \\
\hline
\end{tabular}
\end{table*}
\[
\xi_n=\exp^{-1}(\eta_n)
,\quad
\xi'_n=\exp^{-1}(\eta'_n)
, \quad
w_n=\exp^{-1}(W_n)
\]
The simplest way to design the gain is to use a  function which is linear in $\RR^{\dim \mathfrak{g}}$ and then mapped to $G$ through the exponential:
\begin{align*}
K_n: ~  y \rightarrow \exp[L_n( y-h(I_d,0))]
\end{align*}
Equations \eqref{Markov1} and \eqref{Markov2} mapped to $\RR^{\dim \mathfrak{g}}$ become: 
\begin{align*}
\xi'_{n+1} & = \exp^{-1}(\exp(Ad_{\Upsilon_n} w_n) \exp(Ad_{\Upsilon_n} \xi_n)), \\
\xi_{n+1} & = \exp^{-1}(\exp(\xi_{n+1}')\exp[-L_n( h(\exp(\xi'_{n+1},V_{n+1}))-h(I_d,0))]),
\end{align*}
where we have introduced the adjoint matrix $Ad$:
\begin{dfn}
\label{def::Ad}
The following property of the adjoint matrix $Ad$ can be used as a definition:
\[
\forall g \in G, \forall u \in \RR^{\dim \frak{g}}, ~ \exp(Ad_g u ) = g \exp(u) g^{-1}
\]

The differential of $\xi \rightarrow Ad_{\exp(\xi)}$ at zero is denoted $\xi \rightarrow ad_\xi$.

\end{dfn}
To tune the gains $L_n$ through the Kalman theory we need to evaluate at each time step, using $P_n = Var(\xi_n) \in \RR^{\dim \mathfrak{g} \times \dim \mathfrak{g}}$, the following quantities:
\[
P_{n+1|n} = Var({\xi'_{n+1}})
,\quad
S_{n+1} = Var(Y_{n+1})
\]
Our approach consists of using a coarse first-order development of the propagation and observation functions. To do so we use the Baker-Campbell-Hausdorff formula and keep only first order terms in $\xi'_{n+1}$, $w_n$ and $V_{n+1}$, the crossed terms being also neglected:
\begin{align*}
\xi'_{n+1} & = Ad_{\Upsilon_n} w_n + Ad_{\Upsilon_n} \xi_n \\
h(\exp(\xi'_{n+1}),V_{n+1}) & = h(I_d,0) + H_\xi \xi'_{n+1}+ H_V V_{n+1}\\
\xi_{n+1} & = \xi'_{n+1} - L_{n+1}(H_\xi \xi'_{n+1} + H_V V_{n+1})
\end{align*}
The computation of the gains is then straightforward using the Kalman theory in $\RR^{\dim \mathfrak{g}}$, which is a vector space. It is described in Algorithm \ref{algo::IEKF}.


\subsubsection{Convergence of the gains}
The main benefit of the filter is with respect to its convergence properties. Indeed, under very mild conditions, the covariance matrix $P_n$ and the filter's gain $K_n$ are proved to converge to fixed values \cite{Kalman-1961}. The practical consequences are at least twofold: 1-the error covariance converges to a fixed value and is thus much easier to interpret by the user than a matrix whose entries keep on changing (see Fig. \ref{fig::gains}) 2-due to computational limitations on-board, the covariance may be approximated by its asymptotic value, leading to an asymptotic version of the IEKF being numerically very cheap. 

\begin{thm}
If the noise matrices $Q^w_n=Var(w_n)$, $Q^V_n=Var(V_n)$ and the left inputs $\Upsilon_n$ are fixed, the pair $(Ad_{\Upsilon_n},H_\xi)$ is observable and $Q^w_n$ has full rank, then $P_n$ and $K_n$ converge to fixed values $P_\infty$ and $K_\infty$, as in the linear time-invariant case.
\end{thm}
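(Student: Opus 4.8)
The plan is to recognize that the recursion defining $P_n$ and $L_n$ in Algorithm \ref{algo::IEKF} is nothing but the covariance (Riccati) recursion of a standard discrete-time Kalman filter for a \emph{linear time-invariant} system, and then to invoke the classical convergence theorem for the discrete-time Riccati equation \cite{Kalman-1961}. The decisive structural point, which distinguishes the IEKF from an ordinary EKF, is that the linearization of Subsection \ref{linearization} is performed once and for all around the fixed point $I_d$ rather than around the running estimate. Consequently the linearized error dynamics in the Lie algebra read $\xi'_{n+1}=Ad_\Upsilon \xi_n+Ad_\Upsilon w_n$ and $\xi_{n+1}=\xi'_{n+1}-L_{n+1}(H_\xi\xi'_{n+1}+H_VV_{n+1})$, in which the transition matrix $F:=Ad_\Upsilon$ and the observation matrix $H_\xi$ are \emph{constant} under the hypothesis $\Upsilon_n\equiv\Upsilon$. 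This is exactly what makes the system time-invariant and hence amenable to the steady-state theory.

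Concretely, first I would set $F=Ad_\Upsilon$, $H=H_\xi$, $Q=Q^w$ and $R=H_VQ^VH_V^T$, and check that the lines of Algorithm \ref{algo::IEKF} computing $P_{n+1|n}$, $S_{n+1}$, $L_{n+1}$ and $P_{n+1}$ coincide term by term with the prediction/update form of the Kalman covariance recursion for the quadruple $(F,H,Q,R)$. Eliminating $P_{n+1}$ then yields a single Riccati difference equation in the predicted covariance $P_{n+1|n}$, whose constant coefficients reflect the time-invariance just noted.

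Next I would verify the hypotheses required by the classical convergence result. Observability of $(Ad_\Upsilon,H_\xi)$, which is assumed, immediately gives detectability. The assumption that $Q^w$ has full rank makes the pair $(Ad_\Upsilon,(Q^w)^{1/2})$ reachable in one step, since the reachability matrix already has full row rank at its first block, hence a fortiori stabilizable. Under these two conditions the Riccati difference equation admits a unique positive (semi)definite stabilizing fixed point $P_\infty$ solving the discrete algebraic Riccati equation, and $P_n\to P_\infty$ for every admissible $P_0\geq 0$; continuity of the map $P\mapsto PH_\xi^T(H_\xi P H_\xi^T+R)^{-1}$ then gives $L_n\to L_\infty$, and therefore convergence of the gain $K_n$.

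The step I expect to demand the most care is matching the paper's hypotheses to the exact statement of the classical theorem one invokes, and in particular ensuring that the innovation covariance $S_{n+1}=H_\xi P_{n+1|n}H_\xi^T+R$ stays invertible along the whole recursion so that $L_{n+1}$ is well defined at every step and in the limit; this is guaranteed if $R=H_VQ^VH_V^T$ is positive definite, the usual nondegenerate measurement-noise assumption, which I would state explicitly. Beyond this bookkeeping the result is genuinely a corollary of linear time-invariant Kalman theory, which is precisely the payoff of linearizing the \emph{invariant} error around a single fixed point, and it is why no new convergence analysis is needed here.
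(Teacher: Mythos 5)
Your proposal is correct and follows exactly the route the paper takes: the paper offers no separate proof of this theorem, relying precisely on the observation that linearizing the invariant error around the fixed point $I_d$ with constant $\Upsilon_n\equiv\Upsilon$ yields a time-invariant Riccati recursion with $F=Ad_\Upsilon$, $H=H_\xi$, so that the classical discrete-time Kalman convergence theory (observability plus full-rank process noise giving stabilizability) applies directly. Your added remark about requiring $H_VQ^VH_V^T\succ 0$ for invertibility of $S_{n+1}$ is a sensible piece of bookkeeping that the paper leaves implicit.
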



\subsection{The discrete-time Invariant EnKF}
\label{sect::sampling}

When the error equation is independent of the inputs, the exact density of the error variable can be sampled off-line. This allows to  radically improve the precision of the quantities involved in the Kalman gains computation of Section \ref{linearization}. We propose here the Invariant Ensemble Kalman filter (IEnKF) described in Algorithm \ref{algo::MCIEKF}. The idea is to compute recursively through Monte-Carlo simulations a sampling of the error density and to use it, instead of linearizations, to evaluate precisely the innovation and error covariance matrices used to compute the gains in the Kalman filter. The procedure is described in Algorithm \ref{algo::MCIEKF}. 

\begin{algorithm}[!t]                      
\caption{IEnKF}          
\label{algo::MCIEKF}                           
\begin{algorithmic}                    
 	\STATE Define $H$ by $h(\exp(\xi),0)=h(I_d,0)+H\xi+O(\norm{\xi}^2)$
	\STATE Sample $M$ particles $(\eta_0^i)_{1 \leqslant i \leqslant M}$ following the prior error density
\FOR{$n=0$ \TO $N-1$}
	\FOR{$i=1$ \TO M}
		\STATE{ ${\eta'}_{n+1}^i=W_{n}^i \Upsilon \eta_{n}^i \Upsilon^{-1}$,}
		\STATE{$ y_{n+1}^i=h({\eta'}_{n+1}^i,V_{n+1}^i)$, }
	\ENDFOR
	\STATE{ $P_{n+1|n}=\frac{1}{M} \sum_{i=1}^M \exp^{-1}({\eta'}_{n+1}^i) \exp^{-1}({\eta'}_{n+1}^i)^T$}
	\STATE{ $S_{n+1}=\frac{1}{M} \sum_{i=1}^M y_{n+1}^i {y_{n+1}^i}^T$,}
	\STATE{$L_{n+1} = P_{n+1|n} H^T S_{n+1}^{-1}$ }	
	\FOR{$i=1$ \TO M}
\STATE{ $ {\eta}_{n+1}^i = {\eta'}_{n+1}^i \exp(-L_{n+1} [ y^i_{n+1}-h(I_d,0) ] )  $ }
	\ENDFOR
\STATE{Store the gain matrix $L_{n+1}$}
\ENDFOR
\STATE{The prior estimation $\hat{\chi}_0$ is an input.}
\FOR{$n=0$ \TO N-1}
\STATE{ $\hat{\chi}'_{n+1} = \Upsilon \hat{\chi}_{n} \Omega_{n}$,}
\STATE{$\hat{\chi}_{n+1} = \exp(L_{n+1} [ \hat{\chi'}_{n+1} Y_{n+1}-h(I_d,0) ] ) \hat{\chi'}_{n+1}$ }
\ENDFOR
\end{algorithmic}
\end{algorithm}

\subsection{Simulations}

\label{sect::simus_gaussian}

We will display here the results of Algorithms \ref{algo::IEKF} and  \ref{algo::MCIEKF} for the attitude estimation problem described in Section \ref{expl::2vect}:
\begin{equation*}
\begin{aligned}
\dotex{R}_t  & = R_t (\omega_t+w_t^\omega)_{\times}, \\
Y_n & = (R_{t_n}^T b_1+V_n^1, R_{t_n}^T b_2 + V_n^2),
\end{aligned}
\end{equation*}
where $R_t\in{SO(3)}$ represents the rotation that maps the body frame to the earth-fixed frame,  and $\omega_t$  is the instantaneous angular rotation vector which is assumed to be measured by noisy gyroscopes. $(Y_n)_{n\geq 0}$ is a sequence of discrete noisy measurements of the vectors $b_1$ and $b_2$ (for instance the gravity and the magnetic field), $V_n^1$ and $V_n^2$ are sequences of independent isotropic white noises. A simulation has been performed using the parameters given in Table \ref{tab::param2vect}. In this case the IEKF equations are described by Algorithm \ref{algo::IEKF_SO3}. As the state-of-the-art methods for this particular problem are the Multiplicative Extended Kalman Filter (MEKF, see e.g. \cite{Crassidis}) and the USQUE filter \cite{crassidis2003unscented} (a quaternion implementation of the Unscented Kalman Filter), the four methods are compared. The evolution of the gains is displayed on Fig. \ref{fig::gains} and shows the interest of the invariant approach. The error variable is expressed in the Lie algebra and its projection to the first axis is given on Fig. \ref{fig::2vect_MCIEKF} for each method (the projections on other axes being very similar, they are not displayed due to space limitations). All perform satisfactorily, but some differences are worthy of note. First we can see that the MEKF and the IEKF yield comparable performances but only the gains of the IEKF converge. Moreover, the linearizations lead the MEKF and the IEKF to fail capturing accurately the error dispersion through a  $3\sigma$-envelope. To this respect, the USQUE performs better than the two latter filters but still does not succeed in capturing the uncertainty very accurately. On the other hand, the envelope provided by the EnKF is very satisfying. This result is not surprising: this envelope has been computed using a sampling of the true density of the error, there is thus no reason why it should not be a valid approximation as long as there are sufficiently many particles. This method is thus to be preferred to the other ones when the user is willing and able to perform extensive simulations, and has the capacity to store the gains over the whole trajectory.

\begin{algorithm}                      
\caption{Invariant Extended Kalman Filter on $SO(3)$}          
\label{algo::IEKF_SO3}                           
\begin{algorithmic}                    
\STATE{Returns at each time step $\hat{R}_n$, $P_n$ such that $R_n \approx \exp(\xi_n) \hat{R}_n$, where $\xi_n$ is a centered Gaussian and $var (\xi_n)=P_n$.}
\STATE{$\hat{R}_0$ and $P_0$ are inputs.}
\STATE{$H_\xi = \left( \begin{array}{cc}
(b_1)_\times \\
(b_2)_\times 
 \end{array} \right), ~ H_V = I_6$,}
\STATE{$Q_V=\left( \begin{array}{cc}
Var(V^1) & 0 \\
 0 & Var(V^2) 
 \end{array} \right)$, $Q_w= \int_{t_n}^{t_{n+1}} Var(w_s^\omega) ds$.}

\STATE{Set $n=0$.}
\LOOP
\STATE{Let $\Omega_n$ be the solution at $t_{n+1}$ of $T_{t_n}=I_3$, $\frac{d}{dt} T_t = (\omega_t)_{\times}$.}
\STATE{$P_{n+1|n} = P_{n} + Q_w$,}
\STATE{$S_{n+1} = H_V Q_V H_V^T + H_\xi P_{n+1|n} H_\xi^T$ ,}
\STATE{ $ L_{n+1} = P_{n+1|n} H_\xi^T S_{n+1}^{-1}$,}
\STATE{$ P_{n+1} = (I-L_{n+1} H_\xi)P_{n+1|n} $,}
\STATE{ $\hat{R}'_{n+1} = \hat{R}_n \Omega_n$ }
\STATE{$\Delta_\chi=L_{n+1} [ \left( \begin{array}{c}
\hat{R'}_{n+1} Y_{n+1}^1 \\
 \hat{R'}_{n+1} Y_{n+1}^2
 \end{array} \right) -  \left( \begin{array}{c}
b_1 \\
b_2
 \end{array} \right)  ]$,}
\STATE{ $\hat{R}_{n+1} = \exp(\Delta_\chi) \hat{R'}_{n+1}$ (see Table \ref{tab::formulaire})}
\STATE{$n \leftarrow n+1$}
\ENDLOOP
\end{algorithmic}
\end{algorithm}

\begin{figure}[!t]
\centering
\begin{tabular}{cc}
\includegraphics[width=1.5 in]{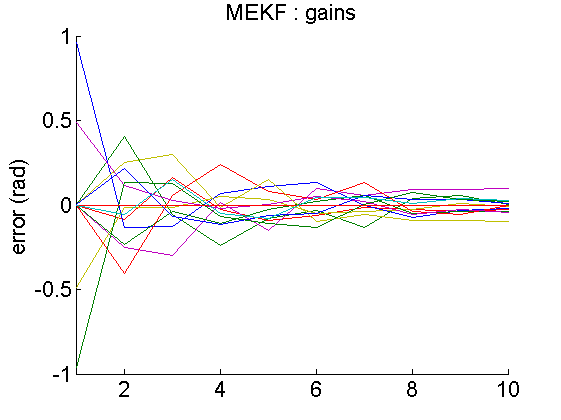} & \includegraphics[width=1.5 in]{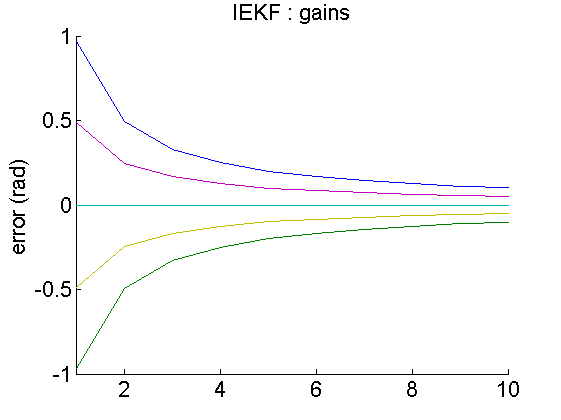} \\
\includegraphics[width=1.5 in]{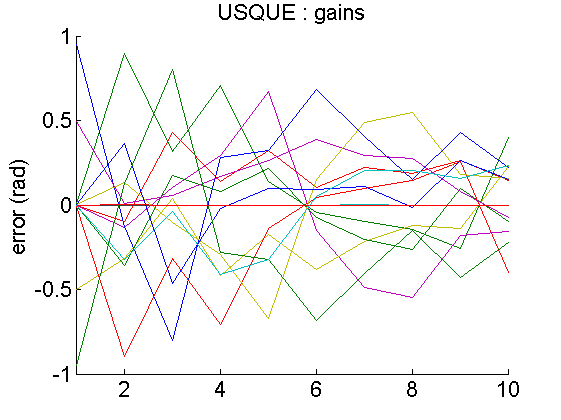} & \includegraphics[width=1.5 in]{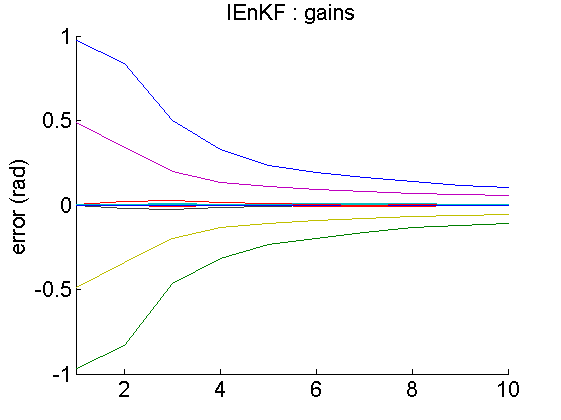}
\end{tabular}
\caption{ Evolution over time of the coefficients of the gain matrix for MEKF, IEKF, USQUE and IEnKF. The coefficients of the MEKF and USQUE have an erratic evolution, whereas those of the IEKF are convergent, allowing to save much computation power using asymptotic values. Moreover, in this case, only 4 of the 18 coefficients are not equal to zero for the IEKF, allowing sparse implementation.}
\label{fig::gains}
\end{figure}


\begin{table*}[!t]
\renewcommand{\arraystretch}{1.3}
\caption{Observation of two vectors: parameters}
\label{tab::param2vect}
\centering
\begin{tabular}{|c||c|c|c|c|}
\hline
\bfseries Parameter & 
$b_1$ &
 $b_2$ & 
$Q^{V_1}$ &
$Q^{V_2}$ \\
\hline
\bfseries Value &
 $(1,0,0)$ &
$(0,1,0)$  &
$0,0873^2 I_3$ &
$0,0873^2 I_3$\\
\hline
\hline
\bfseries Parameter &
$P_0$ &
$Q^w$ &
$N$ &
Simulations \\
\hline
\bfseries Value &
$0.5236^2 I_3$ &
$0,01745^2 I_3$ &
$50$ &
$1000$\\
\hline
\end{tabular}
\end{table*}
\begin{figure}[!t]
\centering
\begin{tabular}{cc}
\includegraphics[width=1.5 in]{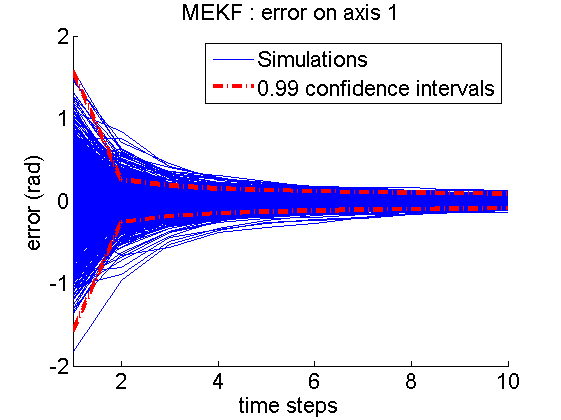} & \includegraphics[width=1.5 in]{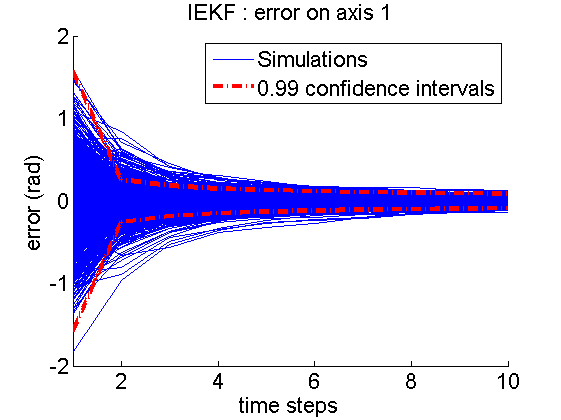} \\ \includegraphics[width=1.5 in]{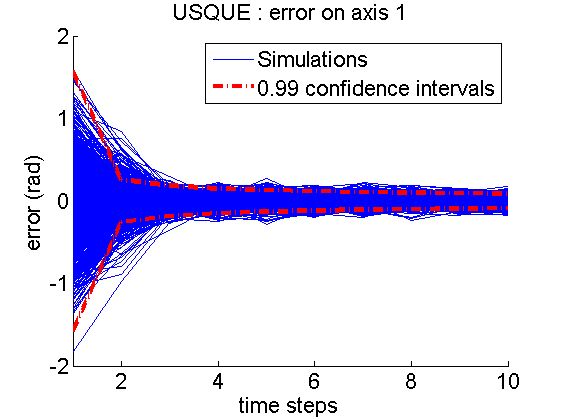} & \includegraphics[width=1.5 in]{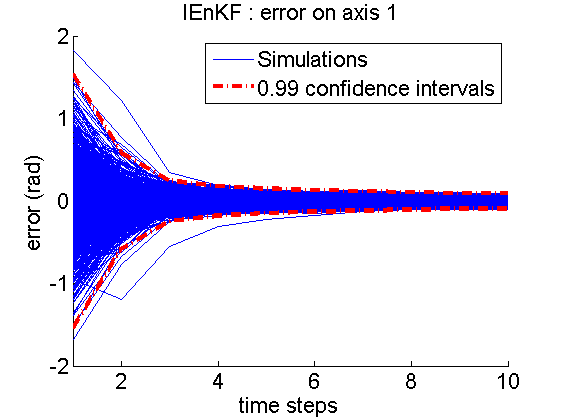}
\end{tabular}
\caption{ Evolution of the error variable, projected on the first axis, for 1000 simulations using MEKF, IEKF, USQUE and IEnKF. The MEKF and the IEKF give similar results. The USQUE is doing slightly better but is outperformed by the IEnKF. Only the latter captures the error dispersion satisfactorily, as the $3 \sigma$ envelope contains 99 $\%$ of the simulated trajectories.}
\label{fig::2vect_MCIEKF}
\end{figure}

\section{Conclusion}

This paper has introduced a proper stochastic framework for a class of filtering problems on Lie groups. A wide class of discrete-time filters with strong properties has been proposed, along with several methods to tune the gains. When the gain is held fixed, the estimation error has the remarkable property to provably converge in distribution. Another route is to focus more on the transitory phase, and to seek to minimize the dispersion of the estimation error at each step based on linearizations and Gaussian noise approximations. This has lead to an invariant version of the celebrated EKF, having the property that the error covariance matrix converges asymptotically, as well as to an invariant version of the EnKF which has been shown to convey an accurate indication of the level of uncertainty carried by the estimate.

The results of the paper open the door to a rich variety of filters on Lie groups. Indeed, all the classical gain tuning approaches can still be considered, and a method based on the unscented transform for instance, completing the work \cite{crassidis2003unscented} seems an interesting idea. In a situation where many observations are available at each time step, an information filter could also be derived mimicking the IEKF. Besides, we believe that the cases where the gains of the IEKF are sparse should be given special attention, by studying when this property occurs and if significant computation power and memory can be saved. The possibilities offered by off-line simulations deserve also to be deepened. The computation of exact confidence sets can be explored, establishing a link with the approach of set-valued observers proposed in \cite{bras2013global} and \cite{sanyal2008global}. Concerning the asymptotic method introduced here, the optimization has been performed quite naively as efficient density simulation is beyond the scope of the paper, but there may certainly be some room for improvements using Markov Chain Monte-Carlo (MCMC) schemes to speed up convergence. In the future we will concentrate on proving the convergence in distribution of the IEKF, that can be conjectured at this stage, but does not seem easy.


\appendix

\section{Proof of Proposition \ref{isomorphism}}

For any $i\in \NN$, let $\mathcal{M}_i(\RR)$ denote the set of square matrices of size $i$. Consider the isomorphisms $\phi_X: X \in \RR^N \rightarrow \begin{pmatrix}
Id_N & 0_{N,1} \\
X^T & 1
 \end{pmatrix} \in \mathcal{M}_{N+1}(\RR)$  and $\phi_Y: Y \in \RR^p \rightarrow \begin{pmatrix}
H^T \\
Y^T
\end{pmatrix} \in \mathcal{M}_{N+1,p}$.
Letting $\chi_n = \phi_X(X_n)$, $Y_n^*=\phi_Y(Y_n)$, $V_n^*=\phi_Y(V_n)$, $W_n^*=\phi_X(W_n)$, $\Omega_n=\phi_X(B_n)$ and $\Upsilon_n=Id_{N+1}$ equations \eqref{separation_obs} become $ \chi_{n+1} = \Upsilon_n W_n^* \chi_n \Omega_n$ and $Y_n^* = \chi_n^{-1}V^*_n$. Using the inverse isomorphisms $\phi_X^{-1}$ and $\phi_Y^{-1}$, the invariant update $\hat{\chi}_{n+1}=K_n(\hat{\chi}'_{n+1} Y^*_{n+1})\hat{\chi}_{n+1}$ corresponds in $\RR^n$ to \eqref{separation_obs}.

\section{Proofs of the results of Section \ref{cv_results}}

We first introduce here the basic notions about Harris Chains that we will need, to prove the stochastic convergence properties of the invariant filters.
\begin{dfn}{(First hitting time)}
Let $ S $ be a measurable space and $(X_n)_{n \geqslant 0}$ a Markov chain taking values in $S$. The first hitting time $\tau_C$ of a subset $C \subset S$ is defined by $\tau_C = \inf_{n \geqslant 0} ~ \{ X_n \in C \}$.
\end{dfn}
\begin{dfn}{(Recurrent aperiodic Harris chain)}
\label{def::Harris}
Let $ S $ be a measurable space and $(X_n)_{n \geqslant 0}$ a Markov chain taking values in $S$. $(X_n)_{n \geqslant 0}$ is said to be a recurrent aperiodic Harris chain if there exist two sets $A,B \subset S$ satisfying the following properties:
\begin{enumerate}[i]
\item For any initial state $X_0$ the first hitting time of $A$ is a.s. finite. \label{hitting_time}
\item 
There exists a probability measure $\rho$ on $B$, and  $\epsilon >0$ such that if $x \in A$ and $D \subset B$ then $\mathbb{P}(X_1 \in D | X_0 =x) > \epsilon \rho(D)$. \label{mixing}
\item There exists an integer $N \geqslant 0$ such that: $\forall x \in A, \forall n \geqslant N, \mathbb{P}(X_n \in A | X_0=x)>0$. \label{aperiodic}
\end{enumerate}
\end{dfn}
The  somewhat  technical property \ref{mixing} means that any given area of $B$ can be reached from each point of $A$ with non-vanishing probability.
\begin{thm}\text{[Harris, 1956]}
\label{thm::Harris}
A recurrent aperiodic Harris chain admits a unique stationary distribution $\rho_\infty$ and the density of the state $X_n$ converges to $\rho_\infty$ in T.V. norm for any distribution of $X_0$.
\end{thm}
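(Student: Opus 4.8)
The plan is to prove the general ergodic theorem by the classical Nummelin splitting construction, which turns the abstract minorization hypothesis \ref{mixing} into a genuine regeneration mechanism and thereby reduces the statement to renewal theory. Concretely, I would enlarge the state space to $\bar S = S \cup \{\alpha\}$ by adjoining one artificial point $\alpha$ and define a split transition kernel $\bar p$ as follows: from $x \notin A$ the chain moves according to the original kernel $p(x,\cdot)$; from $x \in A$ it jumps to $\alpha$ with probability $\epsilon$ and otherwise moves according to the residual, renormalized kernel $(p(x,\cdot)-\epsilon\rho(\cdot))/(1-\epsilon)$; and from $\alpha$ it always moves according to $\rho$, independently of everything in the past. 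The point of property \ref{mixing} is exactly that the residual kernel is a genuine nonnegative probability kernel, so $\bar p$ is well defined, and by construction the $S$-valued part of $(\bar X_n)$ has the law of the original chain $(X_n)$. The decisive gain is that $\alpha$ is now an \emph{atom}: every time the chain leaves $\alpha$ it does so with law $\rho$ regardless of its history.

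The next step is to show $\alpha$ is recurrent and to build the invariant measure from it. Property \ref{hitting_time} guarantees that $A$ is reached in finite time from any state, and since each visit to $A$ independently sends the chain to $\alpha$ with probability at least $\epsilon$, a Borel--Cantelli argument shows that $A$, and hence $\alpha$, is visited infinitely often almost surely. By the strong Markov property applied at the successive hitting times of $\alpha$, the return times $\tau_1,\tau_2,\dots$ are independent and identically distributed and the excursions of $(\bar X_n)$ between consecutive visits to $\alpha$ are i.i.d. This regenerative structure yields the candidate stationary law by Kac's occupation-measure formula: I would set, for measurable $D \subset S$, $\rho_\infty(D) = \mathbb{E}_\alpha\left[\sum_{k=0}^{\tau-1}\mathbf{1}\{\bar X_k \in D\}\right]/\mathbb{E}_\alpha[\tau]$, where $\tau$ is the first return time to $\alpha$. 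A routine first-step computation shows $\rho_\infty$ is $p$-invariant, and recurrence forces any invariant measure to be proportional to the excursion occupation measure, which gives uniqueness. Here I must assume $\mathbb{E}_\alpha[\tau]<\infty$ (positive recurrence) for $\rho_\infty$ to be a probability measure; in the intended application this is automatic because $G$ is compact, and in the general case the confinement assumption plays the same role.

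For the total-variation convergence I would use a coupling driven by the atom. Run two copies of the split chain started from $\mu$ and from $\rho_\infty$; once both copies occupy $\alpha$ at the same instant, the regeneration property lets them be run with identical increments thereafter, so they coincide from that time on. The coupling inequality then gives $\norm{\mu P^n - \rho_\infty}_{T.V.} \leq 2\,\mathbb{P}(T > n)$, where $T$ is the first simultaneous visit to $\alpha$, and it remains to prove $T<\infty$ almost surely. This is the main obstacle: a priori the two independent renewal processes of returns to $\alpha$ could be permanently out of phase and never coincide. This is precisely what aperiodicity \ref{aperiodic} excludes, since it forces the return-time distribution to have $\gcd = 1$; by the renewal theorem two independent aperiodic renewal processes with finite mean gap share a common renewal time almost surely, at which both copies are at $\alpha$ and couple. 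Hence $T$ is finite almost surely and $\mathbb{P}(T>n)\to 0$, and taking the second initial law equal to $\rho_\infty$ gives convergence to $\rho_\infty$ for an arbitrary initial law of $X_0$, which is the claim.
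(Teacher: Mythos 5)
The paper does not actually prove this statement—it defers entirely to Durrett (Theorems 6.5 and 6.8)—and your Nummelin splitting, regeneration, and atom-coupling argument is precisely the proof given in that reference, so you have correctly reconstructed the intended argument rather than found a different route. Your caveat that positive recurrence ($\mathbb{E}_\alpha[\tau]<\infty$) is needed for $\rho_\infty$ to be a probability measure is well taken: the theorem as stated silently presupposes it, and in the paper's application it is supplied by the geometric tail bounds on the hitting time of $U_1$ established in the proof of Lemma~\ref{MainProp}.
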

\begin{proof}
See \cite{durrett2010probability}, Theorems 6.5 and 6.8
\end{proof}
In other words, if for any initialization we can ensure that the process will come back with probability $1$ to a central area $A$ where mixing occurs, we have a convergence property. The following technical result is a cornerstone to demonstrate the theorems of Subsection \ref{cv_results}.
\begin{lm}
\label{MainProp}
Let $G$ be a locally compact Lie group provided with a right-invariant distance $d$, and $C \subset G$ be a measurable compact set endowed with the $\sigma$-algebra $\Sigma$. Consider a  homogeneous Markov chain $(\eta_n)_{n \geqslant 0}$ defined by the relation
$$
\eta_0\in C,\qquad \eta_{n+1}=q_z(\eta_n)
$$where $z \in \mathcal Z$ is some random variable belonging to a measurable space. Let $0$ denote a specific point of $\mathcal Z$.  Let  $Q: (C,\Sigma) \rightarrow \mathbb{R}_+$, denote the transition kernel  of the chain, that is $Q(x, V)=\mathbb P(q_z(\eta_n)\in V\mid \eta_n=x)$. We assume that  
\begin{enumerate}[(a)]
\item there exist real numbers $\alpha,\epsilon >0$ such that for any $x \in C$ and $V \subset  \mathcal{B}_o(q_0(x),\alpha)$, the latter denoting the open ball center $q_0(x)$ and radius $\alpha$, we have $Q(x, V) =\mathbb P(q_z(\eta_n)\in V\mid \eta_n=x) \geqslant \epsilon \mu_G(V)$ where $\mu_G$ is the right-invariant Haar measure. \label{main_diffusion}
\item $q_0$ admits a fixed point $x_0 \in C$, i.e. $q_0(x_0) = x_0$. \label{main_fixed_point}
\end{enumerate}
Let $U_0 = \{ x_0 \}$. Define the sets $(U_n)_{n \geqslant 0}$   recursively by $U_n' =  \{ x \in G, d(x,U_n) < \frac{\alpha}{2}  \} \cap C $ and $U_{n+1}= q_0^{-1}(U_n' )$. If there exists an integer $N \geqslant 0$ such that $U_N = C$, then the p.d.f. of $\eta_n$ converges for the T.V. norm and its limit does not depend on the initialization.
\end{lm}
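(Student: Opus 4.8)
The plan is to verify that $(\eta_n)_{n\geqslant 0}$ is a recurrent aperiodic Harris chain in the sense of Definition \ref{def::Harris} and then invoke Theorem \ref{thm::Harris}. Throughout I take as mixing set a small ball around the fixed point, $A:=\mathcal{B}_o(x_0,r_0)\cap C$, where $r_0\leqslant \alpha/2$ is chosen small enough that $q_0(A)\subset\mathcal{B}_o(x_0,\alpha/2)$; this is possible since $q_0$ is continuous and $q_0(x_0)=x_0$ by assumption (b). I set $B:=A$ and $\rho:=\mu_G(\cdot\cap A)/\mu_G(A)$. Two preliminary facts will be used repeatedly. First, the right-invariant Haar measure $\mu_G$ charges every non-empty open set, so, since $C=cl(C^o)$, the quantity $\mu_G(\mathcal{B}_o(p,\beta)\cap C)$ is positive for every $p\in C$ and $\beta>0$ (indeed $\mathcal{B}_o(p,\beta)\cap C^o\neq\emptyset$ is open and non-empty). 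Second, a short compactness argument shows $m(\beta):=\inf_{p\in C}\mu_G(\mathcal{B}_o(p,\beta)\cap C)>0$: were it $0$, a minimizing sequence $p_j\to p^\ast\in C$ would give $\mathcal{B}_o(p^\ast,\beta/2)\subset\mathcal{B}_o(p_j,\beta)$ for large $j$, forcing $\mu_G(\mathcal{B}_o(p^\ast,\beta/2)\cap C)=0$, contradicting the first fact.

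I first dispatch the minorization (property ii) and aperiodicity (property iii), which are local near $x_0$. For $x\in A$ one has $d(q_0(x),x_0)<\alpha/2$ and $d(y,x_0)<r_0\leqslant\alpha/2$ for any $y\in A$, so by the triangle inequality $A\subset\mathcal{B}_o(q_0(x),\alpha)$; assumption (a) then yields $Q(x,D)\geqslant\epsilon\,\mu_G(D)$ for every $D\subset A$, i.e. $Q(x,D)\geqslant\epsilon\mu_G(A)\,\rho(D)$ with $\epsilon\mu_G(A)>0$, which is the minorization. The same inclusion gives $Q(x,A)\geqslant\epsilon\mu_G(A)>0$ for every $x\in A$, whence by the Markov property $\mathbb{P}_x(\eta_n\in A)\geqslant(\epsilon\mu_G(A))^n>0$ for all $n\geqslant 1$, which is aperiodicity (with $N=1$).

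The core of the argument is the recurrence property (i): the first hitting time $\tau_A$ is a.s. finite from any start in $C$, where the hypothesis $U_N=C$ enters. Given $x\in C=U_N$, unwinding $U_{k+1}=q_0^{-1}(U_k')$ produces a noise-free trajectory $x=x^{(N)},x^{(N-1)},\dots,x^{(0)}=x_0$ with $x^{(k)}\in U_k$ and $d(q_0(x^{(k)}),x^{(k-1)})<\alpha/2$ for each $k$. Using uniform continuity of $q_0$ on the compact $C$, fix $\beta\leqslant\min(r_0,\alpha/4)$ small enough that $d(a,b)<\beta$ implies $d(q_0(a),q_0(b))<\alpha/4$. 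One then shadows this trajectory step by step: conditionally on $\eta_{k-1}$ lying in the tube $\mathcal{B}_o(x^{(N-k+1)},\beta)\cap C$, one has $d(q_0(\eta_{k-1}),x^{(N-k)})<\alpha/4+\alpha/2=3\alpha/4$, hence the next tube $\mathcal{B}_o(x^{(N-k)},\beta)\cap C\subset\mathcal{B}_o(q_0(\eta_{k-1}),\alpha)$, so assumption (a) gives a transition probability at least $\epsilon\,m(\beta)$. Chaining these $N$ estimates through the Markov property, and noting that the final tube $\mathcal{B}_o(x_0,\beta)\cap C\subset A$ since $\beta\leqslant r_0$, gives the uniform bound $\mathbb{P}_x(\tau_A\leqslant N)\geqslant(\epsilon\,m(\beta))^N=:\delta>0$ for every $x\in C$. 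Since the chain stays in $C$ by construction, a standard renewal estimate then yields $\mathbb{P}_x(\tau_A>kN)\leqslant(1-\delta)^k\to 0$, so $\tau_A<\infty$ almost surely.

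With all three properties verified, Theorem \ref{thm::Harris} provides a unique stationary distribution to which the law of $\eta_n$ converges in total variation, independently of the law of $\eta_0$, which is exactly the claim. I expect the main obstacle to be the recurrence step: the shadowing estimate must be made uniform in the starting point, and this is precisely where the combination of $U_N=C$, the uniform continuity of $q_0$ on the compact $C$, and the uniform lower bound $m(\beta)>0$ (itself relying on $C=cl(C^o)$ and the full support of $\mu_G$) is essential. By contrast, the minorization and aperiodicity reduce to elementary triangle-inequality arguments around the fixed point $x_0$.
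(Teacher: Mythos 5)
Your proof is correct and shares the paper's overall skeleton --- verify that $(\eta_n)_{n\geq 0}$ is a recurrent aperiodic Harris chain in the sense of Definition \ref{def::Harris} and invoke Theorem \ref{thm::Harris} --- but your treatment of the recurrence property, which is the heart of the lemma, is genuinely different. The paper takes $A=U_1$ and $B=U_0'$ and proves finiteness of the hitting time of $U_1$ by a \emph{descending induction} on $n$: from $U_{n+1}$ the chain drops into $U_n$ at the next step with probability at least $\epsilon_n>0$, the inductive hypothesis plus the strong Markov property produce infinitely many visits to $U_{n+1}$ via stopping times, and a Borel--Cantelli argument concludes that $U_n$ is eventually hit; this yields almost-sure finiteness of $\tau_{U_1}$ but no uniform rate. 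You instead take $A=B$ equal to a small ball around $x_0$ and \emph{shadow} the entire deterministic orbit $x=x^{(N)},\dots,x^{(0)}=x_0$ inside $\beta$-tubes in one shot, obtaining the uniform bound $\mathbb{P}_x(\tau_A\leq N)\geq(\epsilon\, m(\beta))^N$ for every $x\in C$ and hence geometric tails for $\tau_A$; this is a stronger, quantitative (Doeblin-type) conclusion, and your minimizing-sequence argument for $m(\beta)>0$ is tidier than the paper's appeal to continuity of $x\mapsto\mu_G(\mathcal{B}_o(x,\alpha)\cap U_n)$. The price is that you explicitly use the uniform continuity of $q_0$ on $C$ to propagate the tubes, and the thickness condition $C=cl(C^o)$; neither appears in the statement of Lemma \ref{MainProp}, but this is not a defect relative to the paper, whose own proof also needs $q_0$ continuous (so that $\mathcal{B}_o(x,\alpha)\cap U_n$ is open and of positive measure) and uses $C=cl(C^o)$ in the same role when the lemma is applied in Theorem \ref{thm::global} --- it would simply be worth recording both as standing assumptions.
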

More prosaically, $z$ denotes the cumulated effect of the model and observation noise, and it is assumed that 1- the noiseless algorithm has some global convergence properties and 2- there are sufficiently many small enough noises so that 
there is an $\epsilon$ chance for $\eta_n$ to jump from $x$ to any neighborhood of the noiseless iterate $q_0(x)$ with a multiplicative factor corresponding to the neighborhood's size. It then defines a sequence of sets, by picking a fixed point of $q_0$, dilating it, and taking its pre-image. Dilatation and inversion are then reiterated until the whole set $C$ is covered. In this case, forgetting of the initial distribution is ensured.

\begin{proof}
We will demonstrate the property trough three intermediate results:
\begin{enumerate}
\item There exists a sequence $\epsilon_1,\cdots,\epsilon_N$ such that: $\mathbb P(\eta_1 \in U_n | \eta_0 \in U_{n+1})>\epsilon_n >0$. \label{property1}
\item The first hitting time of $U_1$ is a.s. finite. \label{property2}
\item $(\eta_n)_{n\geq 0}$ is an aperiodic recurrent Harris chain with $A=U_1$ and $B=U_0'$. \label{property3}
\end{enumerate}

The conclusion will then immediately follow from Theorem \ref{thm::Harris}. We give first a qualitative explanation of the approach. We have by assumption a sequence of sets $(U_n)_{n \geqslant 0}$. Intermediate result \ref{property1}) states that starting from a set $U_{n+1}$ the chain has a non-vanishing chance to jump to the ``smaller" set $U_n$ at the next time step. Intermediate result \ref{property2}) states that starting from $U_{n+1}$, because of the non-vanishing property \ref{property1}), the chain cannot avoid $U_n$ forever and eventually reaches $U_n$, $U_{n-1}$ and finally $U_1$. Intermediate result \ref{property3}) brings out the fact that, $A=U_0'$ being totally included in the ball of radius $\alpha$ centered on any of its points, the chain can go from any given point of its pre-image $B=U_1$ to any area of $A$ with controlled probability, which is the last property needed to obtain a recurrent aperiodic Harris chain.

\begin{enumerate}
\item  Consider the closure $\overline{U_n'}$ of $U_n'$. For any $x \in \overline{U_n'}$ the set $\mathcal{B}_o(x,\alpha) \cap U_n$ is open and non-empty (as $d(x,U_n) \leqslant \frac{\alpha}{2} $ by definition of $\overline{U_n'}$) so $\mu_G(\mathcal{B}_o(x,\alpha) \cap U_n) >0$. The function $f : x \rightarrow \mu_G(\mathcal{B}_o(x,\alpha) \cap U_n) >0$ is continuous (since $\mu_G$ is a regular and positive measure) on the compact set $\overline{U_n'}$, so it admits a minimum $m_n > 0$. We get: $P(\eta_1 \in U_n | \eta_0 \in U_{n+1} )\geq P(\eta_1 \in \mathcal{B}_o (q_0(\eta_0),\alpha) \cap U_n | \eta_0 \in U_{n+1}) = P(\eta_1 \in \mathcal{B}_o (q_0(\eta_0),\alpha) \cap U_n | q_0(\eta_0) \in U_n') \geq \epsilon m_n $ using the fact that, by assumption (\ref{main_diffusion}), $Q(x,V)\geq\epsilon\mu_G(V)$ for $x \in C$ and $V \subset \mathcal{B}_o(q_0(x),\alpha)$. We set $\epsilon_n = \epsilon m_n$ to prove intermediate result \ref{property1}). 


\item We will prove by descending induction on $n$ the property $\mathcal{P}_n: P(\tau_n<\infty)=1$. $\mathcal{P}_N$ is obvious as $U_N = C$. Now assume $\mathcal{P}_{n+1}$ is true for $n \geqslant 1$. Due to homogeneity, we can construct a strictly increasing sequence of stopping times $(\nu_p)_{p \geqslant 0}$ such that: $\forall p \geqslant 0, \eta_{\nu_p} \in U_{n+1}$. But   $  \mathbb{P}(\eta_{\nu_p+1}\in U_n | \eta_{\nu_p} \in U_{n+1}) > \epsilon_n $ by intermediate result \ref{property1}). Thus letting $T_k$ denote the event  $\{ \forall i \leqslant k, \eta_i \notin  U_n \} = \{ k < \tau_n \}$, we have for any $n \geqslant 0$:
 $
\mathbb{P}(T_{\nu_p+1})  = \mathbb{P}(T_{\nu_p}) \mathbb{P}(T_{\nu_p+1}|T_{\nu_p})  \leqslant \mathbb{P}(T_{\nu_p}) (1-\epsilon_n)
$. 
Thus 
$\mathbb{P}(T_{\nu_{p+1}}) \leqslant \mathbb{P}(T_{\nu_p +1}) \leqslant \mathbb{P}(T_{\nu_p})(1-\epsilon_n)$. A quick induction and a standard application of Borel-Cantelli Lemma prove there exists a.s. a rank $p$ such that $T_{\nu_p}$ is false. In other words, $\mathcal{P}_n$ is true. In particular, $\mathcal{P}_1$ is true: the first hitting time of $U_1$ is a.s. finite. 

\item Define $A=U_1$ and  $B=U_0'= \mathcal{B}_o(x_0,\frac{\alpha}{2}) \cap C$ . For any $V \subset B$ and $x \subset A$ we have $V \subset \mathcal{B}_o(q_0(x),\alpha)$, thus $Q(x,V) \geqslant \epsilon \mu_G(V)$ by assumption (\ref{main_diffusion}). Thus $(\eta_n)_{n \geqslant 0}$ verifies property \ref{mixing}) of Definition \ref{def::Harris}. As $A=U_1$ the intermediate result \ref{property2}) shows that $(\eta_n)_{n \geqslant 0}$ verifies the property \ref{hitting_time}) of Definition \ref{def::Harris}. As there exists a.s. a rank $k$ such that $\eta_k \in U_1$ by intermediate result \ref{property2}), there exists an integer $M$ such that $\mathbb{P}(\eta_M \in U_1)>0$. As $U_1 \cap U_0' \subset \mathcal{B}_o(q(x),\alpha)$ for any $x \in U_1$ (by definition of $U_0'$ and $U_1$) Assumption (\ref{main_diffusion}) gives immediately $\mathbb{P}(\eta_{M+1} \in U_1 \cap U'_0) > \mathbb{P}(\eta_{M} \in U_1) \epsilon \mu_G(U_1 \cap U'_0)$. Using only the definition of $U_0'$ and $U_1$, and Assumption (\ref{main_fixed_point}), we see easily that both contain $x_0$. Thus the set $U_0' \cap U_1$ is non-empty and open.  We have then $\mu_G(U_1 \cap U'_0)>0$ and $\mathbb{P}(\eta_{M+1} \in U_1 \cap U'_0) > 0$. By an obvious induction on $m$,  $\mathbb{P}(\eta_{m} \in U_1 \cap U'_0)>0$ for any $m>M$. As $U_1 \cap U'_0 \subset A$ we get $\mathbb{P}(\eta_m \in A)>0$ for any $m>M$ and the property 
iii) of Definition \ref{def::Harris} is also verified. We finally obtain that $(\eta_n)_{n \geqslant 0}$ is an aperiodic recurrent Harris chain.
\end{enumerate}
Using Theorem \ref{thm::Harris} we can conclude that the density of $\eta_n$ converges in T.V. norm to its unique equilibrium distribution for any initialization.
\end{proof}

Building upon the previous results, the proof of Theorem \ref{thm::global} is as follows. The Markov chain defined by equations \eqref{Markov1} and \eqref{Markov2} with fixed $\Upsilon$ can be written under the form $\eta'_{n+1}=q_z(\eta'_n)=q_{1,z} \circ q_{2,z}(\eta'_n)$ with $z=(W_n,V_n) \in G \times \RR^p $, $q_{1,z}(x)=\Upsilon W_n x \Upsilon^{-1}$ and $q_{2,z}(x)=xK(h(x,V_n))^{-1}$. Let $0=(I_d,0)$. We will show that $(\eta'_n)_{n \geqslant 0}$ has all the properties required in Lemma \ref{MainProp}.

Let $\alpha, \epsilon, \epsilon'$ be defined as in \ref{cv_results}. Let $\alpha'$ such as $\mathcal{B}_o(I_d, \alpha') \subset \Upsilon \mathcal{B}_o(I_d,\frac{\alpha}{2}) \Upsilon^{-1}$ and $\epsilon''=\epsilon' |Ad_ {\Upsilon^{-1}}| \epsilon$, where $|Ad_{\Upsilon^{-1}}|$ is the determinant of the adjoint operator on $\mathfrak{g}$. For any $x \in C$ and $V \subset \mathcal{B}_o(q_0(x),\alpha')=\mathcal{B}_o(I_d,\alpha')q_0(x)$ we have:
\begin{align*}
P(q_z(x) \in V) &  \geqslant P(q_z(x) \in V,q_{2,z}(x) \in \mathcal{B}_o (q_{2,0}(x), \frac{\alpha}{2})) \\
 &  \geqslant P(q_{2,z}(x) \in \mathcal{B}_o (q_{2,0}(x), \frac{\alpha}{2})) \times \\
 &  P(\Upsilon W_n q_{2,z}(x) \Upsilon^{-1} \in V|q_{2,z}(x) \in \mathcal{B}_o (q_{2,0}(x), \frac{\alpha}{2}))
\end{align*}
i.e. $ P(q_z(x) \in V) \geqslant \epsilon' P(W_n \in \Upsilon^{-1} V \Upsilon q_{2,z}(x)^{-1} |q_{2,z}(x) \in \mathcal{B}_o (q_{2,0}(x), \frac{\alpha}{2}))$.
Assuming $q_{2,z}(x) \in \mathcal{B}_o (q_{2,0}(x),\frac{\alpha}{2})$ we have:
$ \Upsilon^{-1} V \Upsilon q_{2,z}(x)^{-1} \subset \Upsilon^{-1} \mathcal{B}_o(I_d,\alpha') q_0(x) \Upsilon q_{2,z}(x)^{-1} \subset \mathcal{B}_o(I_d, \frac{\alpha}{2}) \Upsilon^{-1} q_0(x) \Upsilon q_{2,z}(x)^{-1}  \subset \mathcal{B}_o(I_d,\frac{\alpha}{2}) q_{2,0}(x) q_{2,z}(x)^{-1}  \subset \mathcal{B}_o(I_d,\frac{\alpha}{2}+\frac{\alpha}{2})
$.
As $W_n$ is independent from the other variables we obtain:
$
P(W_n \in \Upsilon^{-1} V \Upsilon q_{2,z}(x)^{-1} |q_{2,z}(x) \in \mathcal{B}_o (q_{2,0}(x), \frac{\alpha}{2})) \geqslant \epsilon \mu_G(\Upsilon^{-1} V \Upsilon q_{2,z}(x)^{-1}) = \epsilon \mu_G(\Upsilon^{-1} V \Upsilon) = \epsilon \mu_G(V) |Ad_{\Upsilon^{-1}}|
$
And finally:
\[
P(q_z(x) \in V) \geqslant \epsilon' |Ad_{\Upsilon^{-1}}| \epsilon \mu(V) = \epsilon''\mu(V)
\]

 As $q_0$ has $I_d$ as a fix point  we only have to verify that the sets $U_n$ as defined in Lemma \ref{MainProp} eventually cover the whole set $C$. It suffices to consider for any $n \geqslant 0$ the set $D_n = \{ x \in C, \forall k \geqslant n, q_0^k(x) \in U'_0 \}$. As we have $q_0^n(x) \rightarrow I_d$ almost-everywhere on $C$ we get: $\mu_G( \cup_{n \geqslant 0} D_n) = \mu_G(C)$. As the sequence of sets $(D_n)_{n \geqslant 0}$ increases we have: $\mu_G(D_n) \underset{n \rightarrow \infty}{\longrightarrow} \mu_G(C)$. We introduce here the quantity $v_{\min}=\min_{x \in C} \mu_G(\mathcal{B}_o(x, \frac{\alpha'}{2}) \cap C )$ (the property $C=cl(C^o)$ and the regularity of $\mu_G$ ensure that we have $v_{\min}>0$). Let $N \in \mathbb{N}$ be such that $\mu_G(D_N) > \mu_G(C)-v_{\min}$. We have then: $\forall y \in C, d(y, D_N)<\frac{\alpha'}{2}$ (otherwise we would have $\mu_G(D_N) \leqslant \mu_G(C)-v_{\min}$). As we have $D_N \subset U_N$ we obtain $\forall y \in C, d(y, D_N)<\frac{\alpha'}{2}$ thus $U'_N=C$ and $U_{N+1}=q_0^{-1}(U_N')=q_0^{-1}(C)=C$. So all the conditions of Lemma \ref{MainProp} hold and we can conclude  convergence in T.V. norm to a stationary distribution which doesn't depend on the prior, provided that its support lies in $C$. Theorem \ref{thm::global} is proved.

Corollary \ref{thm::global2} follows directly from the case $C=G$.

Theorem \ref{thm::subgroup} can be proved as follows. Let $C=G$. Let $K = \{ x \in G, h(x,0)=h(I_d,0)  \}$. Note that the left-right equivariance assumption ensures that $K$ is a subgroup of $G$. \ First we show that there exists an integer $N_1$ such that $K \subset U_{N_1}$. As we have $\forall x \in K$ $q_0(x)=\Upsilon x \Upsilon^{-1}$ the sequence of sets $Q_n=\Upsilon^{-n} U_n \Upsilon^n \cap K$ is growing and we have: $\{ x \in K, ~d(x,Q_n)<\frac{\alpha}{2} \} \subset Q_{n+1}$.The set $Q_{\infty}= \cup^\infty _{n=1} Q_n$ is open in $K$ as an union of open sets, but we have $\forall x \in Q_\infty,\mathcal{B}_o(x,\frac{\alpha}{2}) \cap K \subset Q_\infty $ thus $\forall x \in K \setminus Q_\infty,\mathcal{B}_o(x,\frac{\alpha}{4}) \cap K \subset K \setminus Q_\infty$. This implies that $Q_\infty$ and $K \setminus Q_\infty$ are both open in $K$. As $K$ is connected (see assumptions of Theorem \ref{thm::subgroup}) we obtain $Q_\infty = K$. As $K$ is compact (as a closed subset of a compact) the open cover $\cup _{i=1}^\infty Q_n$ has a finite subcover and there exists an integer $N_1$ such that $Q_{N_1}=K$, i.e. $\Upsilon^{N_1} K \Upsilon^{-N_1} = U_{N_1}$. As $K$ is a subgroup of $G$ (see above) containing $\Upsilon$ we obtain $K \subset U_{N_1}$. Now we have to prove that the sets $(U_n)_{n \geqslant 0}$ eventually cover the whole set $C$. For any $x \in G$ we have $h(q_0^n(x),0) \rightarrow h(I_d,0)$. Thus there exists a rank $n$ such that $\forall k \geqslant n, q_0^k(x) \in U_N'$ (otherwise we could extract a subsequence from $(q_0^n(x))_{n \geqslant 0}$ which stays at a distance $> \frac{\alpha}{2}$ from $K$, as $G$ is compact we could extract again a convergent subsequence and its limit $q_0^\infty (x)$ would be outside $K$, thus we would have $h(q_0^n(x)) \not\rightarrow h(I_d,0)$). Here we can define as in the proof of Theorem \ref{thm::global} the sets $D_n = \{ x \in C, \forall k \geqslant n, q_0^k (x) \in U_{N_1}'  \}$ (note that $U_1'$ has been replaced by $U_{N_1}'$ in this definition of $D_n$). As for almost any $x\in G$ there exists a rank $n$ such that $x \in D_n$ we have $\mu_G(D_n) \rightarrow \mu_G(C)$ and as in the proof of Theorem \ref{thm::global} the sets $U_n$ eventually cover the set $C$.

The second part of the property (invariance to left multiplication) is easier. Consider the error process $\zeta_n = \tilde{\Upsilon} \eta_n$. The propagation and update steps read:
\begin{align*}
\zeta_{n+1}'=\tilde{\Upsilon} \eta_{n+1}'=\tilde{\Upsilon} \Upsilon W_n \eta_n \Upsilon^{-1}= \Upsilon \tilde{\Upsilon} W_n \eta_n \Upsilon^{-1} & \overset{\mathcal{L}}{=}\Upsilon W_n \tilde{\Upsilon} \eta_n \Upsilon^{-1} \\
 & \overset{\mathcal{L}}{=}\Upsilon W_n \zeta \Upsilon^{-1}
\end{align*}
\begin{align*}
\zeta_{n+1} & =\tilde{\Upsilon} \eta_{n+1}' K_{n+1}(h(\eta_{n+1}', V_{n+1}))^{-1}  \\
 & \overset{\mathcal{L}}{=} \tilde{\Upsilon} \eta_{n+1}' K_{n+1}(h(\tilde{\Upsilon} \eta_{n+1}', V_{n+1}))^{-1} \\
 & \overset{\mathcal{L}}{=} \zeta_{n+1}' K_{n+1}(h(\zeta', V_{n+1}))^{-1}
\end{align*}
where we have used the property: $h(\eta_{n+1}', V_{n+1}) \overset{\mathcal{L}}{=} \eta_{n+1}'^{-1} h(I_d, V_{n+1}) \overset{\mathcal{L}}{=} \eta_{n+1}'^{-1} h(\tilde{\Upsilon}, V_{n+1}) \overset{\mathcal{L}}{=} h(\tilde{\Upsilon} \eta_{n+1}', V_{n+1})$.
We see that the law of the error process is invariant under left multiplication by $\tilde{\Upsilon}$, thus the asymptotic distribution $\pi$ inherits this property.

\section{Proofs of the results of Section \ref{asymptotic2vect}}

\subsection{Proof of Proposition \ref{thm::discrete}}

We define the continuous process $\tilde{\gamma}_t: \RR \rightarrow G$ as follows:
\[
\forall n \in \mathbb{N}, \tilde{\gamma}_n= \gamma_n
\]
\[
\forall n\in \mathbb{N}, \forall t \in [n,n+1[, \dotex{\tilde{\gamma}}_t = -\tilde{\gamma}_t k(\tilde{\gamma}_n)
\]
We obtain immediately that $\tilde{\gamma}_t$ is continuous. Besides, for any $n \in \mathbb{N}$ and $t \in ]n,n+1[$ one has:
\begin{align*}
|\frac{d}{dt} \langle k(\tilde{\gamma}_t), k(\tilde{\gamma}_n) \rangle| & = |\langle \frac{d}{dt} k(\tilde{\gamma}_t), k(\tilde{\gamma}_n) \rangle| \\ 
 &  \leqslant | \frac{d}{dt} k(\tilde{\gamma}_t) | | k(\tilde{\gamma}_n) |  \\
 & \leqslant | \frac{d}{dt}\tilde{\gamma}_t | | k(\tilde{\gamma}_n) |  \text{ (due to $|\frac{\partial k}{\partial x}| \leqslant 1$ )}\\
 & \leqslant | k(\tilde{\gamma}_n) |^2 \text{ (as $\frac{d}{dt}\tilde{\gamma}_t = -\tilde{\gamma}_t k(\tilde{\gamma}_n)$ )}
\end{align*}
Thus:
\[
\langle k(\tilde{\gamma}_t), k(\tilde{\gamma}_n) \rangle \geqslant \langle k(\tilde{\gamma}_n), k(\tilde{\gamma}_n) \rangle -(t-n)|k(\tilde{\gamma}_n)|^2 \geqslant 0
\]
proving in the Lie algebra that $
\langle k(\tilde{\gamma}_t), k(\tilde{\gamma}_n) \rangle \geqslant 0$. 
As $u$ is a positive function we have thus $
u(\tilde{\gamma}_t)^{-1} \langle \tilde{\gamma}_t k(\tilde{\gamma}_t), -\tilde{\gamma}_t k(\tilde{\gamma}_n) \rangle \leqslant 0
$ immediately proving 
$\langle grad_E(\tilde{\gamma_t}) , \frac{d}{dt}\tilde{\gamma}_t \rangle \leqslant 0$. The latter result being true for every $t \in \RR \backslash \mathbb{N}$ and the function $E(\tilde{\gamma}_t)$ being continuous, it decreases and thus converges on $\mathbb{R}_{\geq 0}$. As we have supposed the sublevel sets of $E$ are bounded (and closed as $E$ is continuous), $\tilde{\gamma}_t$ is stuck in a compact set. Let $\gamma_\infty$ an adherence value of $\tilde{\gamma}_n$. By continuity of $E$ we have $E(\tilde{\gamma}_\infty) = E(\tilde{\gamma}_\infty \exp(-k(\tilde{\gamma}_\infty))$. Thus the function $t \rightarrow E(\tilde{\gamma}_{\infty}\exp(-tk(\tilde{\gamma}_\infty)))$ is decreasing on $[0,1]$ and has the same value at $0$ and $1$. Thus it is constant, its derivative at 0 is null proving $\gamma_\infty=I_d$. $(\tilde{\gamma_n})_{n \geqslant 0}$ being confined in a compact set and having $I_d$ as unique adherence value we finally get $\gamma_n = \tilde{\gamma_n} \rightarrow I_d$.


\subsection{Proof of Proposition \ref{prop::K1}}

Let $\gamma_n = R_n \hat{R}_n^T$ be the invariant error. Its equation reduces to:
\[
\gamma_{n+1} = \gamma_{n}.\exp(-k_1(\gamma_n^T b_1) \times b_1-k_2(\gamma_n^T b_2) \times b_2)
\]
Let $k(\gamma)=k_1(\gamma^T b_1) \times b_1+k_2(\gamma^T b_2) \times b_2$ and $E: \gamma \rightarrow k_1 ||\gamma^T b_1-b_1||^2 + k_2 ||\gamma^T b_2 - b_2||^2$. To apply Proposition \ref{thm::discrete} we will first verify that 1) $\forall \gamma \in SO(3), \gamma . k(\gamma) = grad_E(\gamma)$
2) $|\frac{\partial k}{\partial \gamma}| \leqslant 1 $. 
To prove 1) consider the dynamics in $\in SO(3)$ defined by $\frac{d}{dt}\gamma_t = \gamma_t (\psi_t)_{\times}$ for some rotation vector path $\psi_t$. We have 
$\frac{d}{dt}E(\gamma_t)  = k_1(\gamma_t^Tb_1-b_1)^T \frac{d}{dt}\gamma_t^Tb_1 + k_2(\gamma_t^Tb_2-b_2)^T \frac{d}{dt}\gamma_t^Tb_2$. Using triple product equalities this is equal to 
 $\langle k_1 (\gamma_t^Tb_1)_\times b_1 +  k_2 (\gamma_t^Tb_2)_\times b_2, \psi_t \rangle= \langle k(\gamma_t), \psi_t \rangle  = \langle\gamma_t. k(\gamma_t), \frac{d}{dt}\gamma_t \rangle$.
Thus $\gamma.k(\gamma)=grad_E(\gamma)$. To prove 2) we analogously see that $
\frac{d}{dt}k(\gamma_t) = -[k_1(b_1)_\times (\gamma_t^Tb_1)_\times + k_2(b_2)_\times (\gamma_t^Tb_2)_\times] \psi_t$. Thus $|\frac{d}{dt}k(\gamma_t)|  \leqslant (k_1+k_2)|\psi_t|$
and finally $|\frac{\partial k}{\partial \gamma}| \leqslant 1 $. Now, except if initially $\gamma_0$ is the rotation of axis $b_1\times b_2$ and angle $\pi$, the function $E$ is strictly decreasing, and $I_d$ is the only point in the sublevel set $\{\gamma\in SO(3)\mid E(\gamma)\leq E(\gamma_0)\}$ such that $grad_E(\gamma)=0$. Applying Proposition \ref{thm::discrete} allows to prove Proposition \ref{prop::K1}.


\bibliographystyle{plain}
\bibliography{rhn}

\begin{thebibliography}{10}

\bibitem{barrau-bonnabel-cdc13}
S.~Bonnabel A.~Barrau.
\newblock Intrinsic filtering on {SO(3)} with discrete-time observations.
\newblock In {\em IEEE Conference on Decision and Control}, 2013.

\bibitem{barczyk2013invariant}
M.~Barczyk and A.~F. Lynch.
\newblock Invariant observer design for a helicopter uav aided inertial
  navigation system.
\newblock {\em Control Systems Technology, IEEE Transactions on},
  21(3):791--806, 2013.

\bibitem{bonnabel2007left}
S.~Bonnabel.
\newblock Left-invariant extended {K}alman filter and attitude estimation.
\newblock In {\em Decision and Control, 2007 46th IEEE Conference on}, pages
  1027--1032. IEEE, 2007.

\bibitem{bonnabel2008symmetry}
S.~Bonnabel, Ph. Martin, and P.~Rouchon.
\newblock Symmetry-preserving observers.
\newblock {\em Automatic Control, IEEE Transactions on}, 53(11):2514--2526,
  2008.

\bibitem{arxiv-08}
S.~Bonnabel, Ph. Martin, and P.~Rouchon.
\newblock Non-linear symmetry-preserving observers on {L}ie groups.
\newblock {\em IEEE Trans. on Automatic Control}, 54(7):1709 -- 1713, 2009.

\bibitem{bonnabel2009invariant}
S.~Bonnabel, Ph. Martin, and E.~Salaun.
\newblock Invariant extended {K}alman filter: theory and application to a
  velocity-aided attitude estimation problem.
\newblock In {\em Decision and Control, 2009 held jointly with the 2009 28th
  Chinese Control Conference. CDC/CCC 2009. Proceedings of the 48th IEEE
  Conference on}, pages 1297--1304. IEEE, 2009.

\bibitem{bourmaud2013discrete}
G.~Bourmaud, R.~M{\'e}gret, A.~Giremus, Y.~Berthoumieu, et~al.
\newblock Discrete extended {K}alman filter on {L}ie groups.
\newblock In {\em 21st European Signal Processing Conference 2013 (EUSIPCO
  2013)}, 2013.

\bibitem{bras2013global}
S.~Br{\'a}s, P.~Rosa, C.~Silvestre, and P.~Oliveira.
\newblock Global attitude and gyro bias estimation based on set-valued
  observers.
\newblock {\em Systems \& Control Letters}, 62(10):937--942, 2013.

\bibitem{brockett1973lie}
R.~W. Brockett.
\newblock Lie algebras and {L}ie groups in control theory.
\newblock In {\em Geometric methods in system theory}, pages 43--82. Springer,
  1973.

\bibitem{pardoux}
D.L. Burkholder, {\'E}.~Pardoux, A.S. Sznitman, and P.L. Hennequin.
\newblock {\em {\'E}cole d'{\'e}t{\'e} de probabilit{\'e}s de Saint-Flour XIX,
  1989}.
\newblock Springer-Verlag, 1991.

\bibitem{crassidis2003unscented}
J.~L. Crassidis and F~Landis Markley.
\newblock Unscented filtering for spacecraft attitude estimation.
\newblock {\em Journal of guidance, control, and dynamics}, 26(4):536--542,
  2003.

\bibitem{Crassidis}
J.~L. Crassidis, F.~Landis Markley, and Yang Cheng.
\newblock A survey of nonlinear attitude estimation methods.
\newblock {\em Journal of Guidance, Control,and Dynamics}, 30(1):12--28, 2007.

\bibitem{duncan1977some}
T.E. Duncan.
\newblock Some filtering results in {R}iemann manifolds.
\newblock {\em Information and Control}, 35(3):182--195, 1977.

\bibitem{duncan1979stochastic}
T.E. Duncan.
\newblock Stochastic systems in {R}iemannian manifolds.
\newblock {\em Journal of Optimization theory and applications},
  27(3):399--426, 1979.

\bibitem{duncan1988estimation}
T.E. Duncan.
\newblock An estimation problem in compact {L}ie groups.
\newblock {\em Systems \& Control Letters}, 10(4):257--263, 1988.

\bibitem{durrett2010probability}
R.~Durrett.
\newblock {\em Probability: theory and examples}, volume~3.
\newblock Cambridge university press, 2010.

\bibitem{han2001least}
Y.~Han and F.C. Park.
\newblock Least squares tracking on the {E}uclidean group.
\newblock {\em Automatic Control, IEEE Transactions on}, 46(7):1127--1132,
  2001.

\bibitem{ito1950stochastic}
K.~It{\^o}.
\newblock Stochastic differential equations in a differentiable manifold.
\newblock {\em Nagoya Mathematical Journal}, 1:35--47, 1950.

\bibitem{UKF}
S.J. Julier and J.K. Uhlmann.
\newblock A new extension of the {K}alman {F}ilter to {N}onlinear {S}ystems.
\newblock {\em Proc. of AeroSense : The 11th Int. Symp. on Aerospace/Defence
  Sensing, Simulation and Controls}, 1997.

\bibitem{Kalman-1961}
R.~Kalman and R.~Bucy.
\newblock New results in linear filtering and prediction theory.
\newblock {\em Basic Eng., Trans. ASME, Ser. D,}, 83(3):95--108, 1961.

\bibitem{lageman2010gradient}
Ch. Lageman, J.~Trumpf, and R.~Mahony.
\newblock Gradient-like observers for invariant dynamics on a {L}ie group.
\newblock {\em Automatic Control, IEEE Transactions on}, 55(2):367--377, 2010.

\bibitem{lefferts1982kalman}
E.~J. Lefferts, F~Landis Markley, and M.~D. Shuster.
\newblock Kalman filtering for spacecraft attitude estimation.
\newblock {\em Journal of Guidance, Control, and Dynamics}, 5(5):417--429,
  1982.

\bibitem{LevyLieGroups}
M.~Liao.
\newblock {\em L\'evy processes in {L}ie groups}.
\newblock Cambridge University Press, 2004.

\bibitem{lo1986optimal}
JT-H Lo.
\newblock Optimal estimation for the satellite attitude using star tracker
  measurements.
\newblock {\em Automatica}, 22(4):477--482, 1986.

\bibitem{Zamni11}
J.~Trumpf M.~Zamani and R.~Mahony.
\newblock Near-optimal deterministic filtering on the rotation group.
\newblock {\em IEEE Trans. on Automatic Control}, 56:6:1411 -- 1414, 2011.

\bibitem{mahony-et-al-IEEE}
R.~Mahony, T.~Hamel, and J-M Pflimlin.
\newblock Nonlinear complementary filters on the special orthogonal group.
\newblock {\em IEEE-Trans. on Automatic Control}, 53(5):1203--1218, 2008.

\bibitem{markley2005nonlinear}
F~Landis Markley, J.~L. Crassidis, and Y.~Cheng.
\newblock Nonlinear attitude filtering methods.
\newblock In {\em AIAA Guidance, Navigation, and Control Conference}, 2005.

\bibitem{MarSal2007a}
Ph. Martin and E.~Sala{\"u}n.
\newblock Invariant observers for attitude and heading estimation from low-cost
  inertial and magnetic sensors.
\newblock In {\em 46th IEEE Conf Decis Contr}, pages 1039--1045, 2007.

\bibitem{ng1985nonlinear}
S.K. Ng and P.E. Caines.
\newblock Nonlinear filtering in {R}iemannian manifolds.
\newblock {\em IMA journal of mathematical control and information},
  2(1):25--36, 1985.

\bibitem{nijmeijer1999new}
H.~Nijmeijer and T.~I. Fossen.
\newblock {\em New directions in nonlinear observer design}, volume 244.
\newblock Springer, 1999.

\bibitem{olver-book99}
P.~J. Olver.
\newblock {\em Classical Invariant Theory}.
\newblock Cambridge University Press, 1999.

\bibitem{park2008kinematic}
W.~Park, Y.~Liu, Y.~Zhou, M.~Moses, G.~S. Chirikjian, et~al.
\newblock Kinematic state estimation and motion planning for stochastic
  nonholonomic systems using the exponential map.
\newblock {\em Robotica}, 26(4):419--434, 2008.

\bibitem{pontier1987filtering}
M.~Pontier and J.~Szpirglas.
\newblock Filtering on manifolds.
\newblock In {\em Stochastic Modelling and Filtering}, pages 147--160.
  Springer, 1987.

\bibitem{sacconsecond}
A.~Saccon, J.~Trumpf, R.~Mahony, and A.~P. Aguiar.
\newblock Second-order-optimal filters on lie groups.

\bibitem{said2009estimation}
S.~Said.
\newblock {\em Estimation and filtering of processes in matrix {L}ie groups}.
\newblock PhD thesis, Institut National Polytechnique de Grenoble-INPG, 2009.

\bibitem{said2013filtering}
S.~Said and J.~H. Manton.
\newblock On filtering with observation in a manifold: Reduction to a classical
  filtering problem.
\newblock {\em SIAM Journal on Control and Optimization}, 51(1):767--783, 2013.

\bibitem{salcudean1991globally}
S.~Salcudean.
\newblock A globally convergent angular velocity observer for rigid body
  motion.
\newblock {\em Automatic Control, IEEE Transactions on}, 36(12):1493--1497,
  1991.

\bibitem{sanyal2008global}
A.~K. Sanyal, T.~Lee, M.~Leok, and N.~H. McClamroch.
\newblock Global optimal attitude estimation using uncertainty ellipsoids.
\newblock {\em Systems \& Control Letters}, 57(3):236--245, 2008.

\bibitem{tayebi2007attitude}
A.~Tayebi, S.~McGilvray, A.~Roberts, and M.~Moallem.
\newblock Attitude estimation and stabilization of a rigid body using low-cost
  sensors.
\newblock In {\em Decision and Control, 2007 46th IEEE Conference on}, pages
  6424--6429. IEEE, 2007.

\bibitem{thienel2003coupled}
J.~Thienel and Robert~M Sanner.
\newblock A coupled nonlinear spacecraft attitude controller and observer with
  an unknown constant gyro bias and gyro noise.
\newblock {\em Automatic Control, IEEE Transactions on}, 48(11):2011--2015,
  2003.

\bibitem{tron2012riemannian}
R.~Tron, B.~Afsari, and R.~Vidal.
\newblock Riemannian consensus for manifolds with bounded curvature.
\newblock 2012.

\bibitem{Vasconcelos}
J.F. Vasconcelos, R.~Cunha, C.~Silvestre, and P.~Oliveira.
\newblock A nonlinear position and attitude observer on {SE}(3) using landmark
  measurements.
\newblock {\em Systems Control Letters}, 59:155--166, 2010.

\bibitem{raey}
A.~S. Willsky.
\newblock Some estimation problems on {L}ie groups.
\newblock In D.Q. Mayne and R.W. Brockett, editors, {\em Geometric Methods in
  System Theory}, volume~3 of {\em NATO Advanced Study Institutes Series},
  pages 305--314. Springer Netherlands, 1973.

\bibitem{willsky1975estimation}
A.~S. Willsky and S.~I. Marcus.
\newblock Estimation for bilinear stochastic systems.
\newblock In {\em Variable Structure Systems with Application to Economics and
  Biology}, pages 116--137. Springer, 1975.

\bibitem{willsky-thesis}
A.S. Willsky.
\newblock {\em Dynamical Systems Defined on Groups: Structural Properties and
  Estimation.}
\newblock PhD thesis, MIT Dept. of Aeronautics and Astronautics, May 1973.

\bibitem{BayesianLieGroups2011}
K.~C. Wolfe, M.~Mashner, and G.~S. Chirikjian.
\newblock Bayesian fusion on {L}ie groups.
\newblock {\em Journal of Algebraic Statistics}, 2(1):75--97, 2011.

\bibitem{wong1969riemann}
E.~Wong and M.~Zakai.
\newblock Riemann-{S}tieltjes approximations of stochastic integrals.
\newblock {\em Probability Theory and Related Fields}, 12(2):87--97, 1969.

\end{thebibliography}



\end{document}